\def\BState{\State\hskip-\ALG@thistlm}
\DeclareMathOperator\erf{erf}
\DeclareMathOperator\erfc{erfc}
\newtheorem{lemma}{Lemma}
\newtheorem{cor}{Corollary}
\newtheorem{ppro}{Proposition}
\newtheorem{theorem}{Theorem}
\newtheorem{corollary}{Corollary}
\def\ScaleIfNeeded{%
\ifdim\Gin@nat@width>\linewidth \linewidth \else \Gin@nat@width
\fi } \makeatother
\begin{document}

\title{Modeling and Simulation of Molecular Communication Systems with a   Reversible Adsorption Receiver}

\author{Yansha~Deng,~\IEEEmembership{Member,~IEEE,}
        Adam~Noel,~\IEEEmembership{Member,~IEEE,}
         Maged~Elkashlan,~\IEEEmembership{Member,~IEEE,}\\
        Arumugam~Nallanathan,~\IEEEmembership{Senior Member,~IEEE,}
        and~Karen C. Cheung.

\vspace{-0.2cm}

\thanks{
Y. Deng and A. Nallanathan are with  Department of Informatics, King's College London, London, WC2R 2LS, UK (email:\{yansha.deng, arumugam.nallanathan\}@kcl.ac.uk).}
\thanks{A. Noel is with the School of Electrical Engineering and Computer Science, University of Ottawa, Ottawa, ON, K1N 6N5, Canada (email: anoel2@uottawa.ca).}
\thanks{ K. C. Cheung is with Department of Electrical and Computer Engineering,
University of British Columbia, Vancouver, BC,  V6T 1Z4, Canada (email:  kcheung@ece.ubc.ca).}
\thanks{M. Elkashlan is with  Queen Mary University of London, London E1 4NS,
UK (email: maged.elkashlan@qmul.ac.uk).}
%\thanks{K. C. Cheung is with Department of Electrical and Computer Engineering,
%University of British Columbia, Vancouver, BC, , V6T 1Z4, Canada (email: \{kcheung\}@ece.ubc.ca).}
%\thanks{ R. Schober is with the Department of Electrical and Computer Engineering,
%University of British Columbia, Vancouver, BC, Canada, V6T 1Z4, and also
%with  the Institute for Digital Communication, Friedrich-Alexander-Universit\"{a}t Erlangen-N\"{u}rnberg (FAU), Erlangen, Germany (email: \{rschober\}@ece.ubc.ca; \{schober\}@lnt.de).}
}

\maketitle

\begin{abstract}
In this paper, we present an analytical model for  the diffusive molecular communication (MC) system with a reversible adsorption receiver in a fluid environment. The widely used concentration shift keying (CSK) is considered for modulation. The time-varying  spatial distribution of the  information molecules  under  the reversible adsorption and desorption reaction at the surface of a receiver is analytically characterized.  Based on the spatial distribution, we derive the net number of adsorbed information molecules expected in any time duration. We further derive the  net number of  adsorbed  molecules expected at the steady state to demonstrate the equilibrium concentration. Given the net number of adsorbed information molecules, the bit error probability of the proposed MC system  is analytically approximated. Importantly, we present a simulation framework for the proposed model that accounts for the diffusion and reversible reaction. Simulation results show the accuracy of our derived expressions, and demonstrate the positive effect of the adsorption rate and the negative effect of the desorption rate on the  error probability of reversible adsorption receiver with last transmit bit-1.  Moreover, our analytical results simplify to
the special cases of a full  adsorption receiver and a partial adsorption receiver, both of which do not include desorption. 
\end{abstract}

\begin{IEEEkeywords}
Molecular communication, reversible adsorption receiver, time varying spatial distribution, error probability.
\end{IEEEkeywords}

\maketitle

\section{Introduction}

% introduce/motivate molecular communication (overall and then via diffusion; could be in different paragraphs, especially for a journal version)
% introduce what is molecular communication and the application and categary
 
Conveying information over a distance has been a problem for decades, and is  urgently demanded for multiple distance scales and various environments. The conventional solution is to utilize electrical- or electromagnetic-enabled  communication, which  is unfortunately inapplicable or inappropriate in very small dimensions or in specific  environments, such as in
salt water, tunnels, or human bodies.
Recent breakthroughs in bio-nano technology have motivated  molecular communication \cite{nakano2013molecu,nariman2016_survey} to be  a biologically-inspired technique for nanonetworks, where devices with functional components on the scale of 1--100 nanometers (i.e., nanomachines) share information over distance via chemical signals  in nanometer to micrometer scale environments.
These small scale bio-nanomachines are capable of encoding  information onto    physical molecules, sensing, and decoding the received information molecules,  which could enable  applications in drug delivery, pollution control,  health, and environmental monitoring \cite{nakano2012molecular}.

% motivate molecular communication  via diffusion

Based on the propagation channel, molecular communication (MC) can be classified into one of three categories: 1) Walkway-based MC, where molecules move directionally along  molecular rails  using  carrier substances, such as molecular motors \cite{moore2006design}; 2) Flow-based paradigm, where molecules propagate primarily via fluid flow. An example of this kind is the hormonal communication through the bloodstream in the human body\cite{nakano2013molecu}; 3) Diffusion-based MC, where molecules propagate via the random motion, namely Brownian motion,  caused by collisions with the fluid's molecules. In this case, molecule motion is less predictable, and the propagation is often assumed to follow the laws of a Wiener process. Examples  include  deoxyribonucleic acid (DNA) signaling among DNA segments \cite{howard1993radom630},  calcium signaling among cells \cite{nakano2005molecular}, and  pheromonal communication among animals \cite{bossert1963analysis}.

 Among the aforementioned three MC paradigms,  diffusion-based MC is the most simple,  general and energy efficient transportation  paradigm without the need for external energy or infrastructure. Thus,   research has  focused on the mathematical modeling and  theoretical analysis \cite{eckford2007nanoscale,kadloor2012molecular,kadloor2009framework,srinivas2012molecular,egan2014variance},  reception design \cite{noel2014improving}, receiver modeling \cite{yilmaz2014three844},  and  modulation and demodulation techniques \cite{yilmaz2014simulation,chun2015optimal,Yilmaz2014effect}, of  diffusion-based MC systems. 

In  diffusion-based MC, the transmit signal is encoded on the physical characteristics of information molecules (such as hormones, pheromones, DNA), which propagate through the fluid medium  via diffusion with the help of  thermal energy in the environment. The information can be encoded onto the the quantity,  identity, or released timing of the molecules. In the domain of timing channel, the first work on  diffusion based MC was pioneered by Eckford \cite{eckford2007nanoscale}, in which the propagation timing channel is ideally characterized as an additive noise channel.
In the domain of concentration-based encoding, the concentration level of information molecules represents different transmit signals.
Since the average displacement of an information molecule is directly proportional to the square root of diffusion time \cite{howard1993radom630},  long distance transmission requires much longer propagation times. Moreover, the randomness of the  arriving time for each molecule makes it difficult for the receiver to distinguish  between the signals transmitted in different bit intervals, because the number of received molecules in the current symbol  depends on the  molecules emitted in previous and current symbols. This is known as intersymbol interference (ISI).

 %motivate a chemically-based receiver as a realistic design (give biological examples)
 In most  existing literature, some  assumptions are made in order to  focus on the propagation channel. One  assumption is that each molecule is removed from the environment when it contributes once to the received signal. As such,  the information molecule concentration near the receiver is intentionally changed \cite{cuatrecasas1974membrane}. Another widely-used idealistic assumption is to consider a passive receiver, which is permeable to the information molecules passing by,  and is capable of counting the number molecules inside the receiver volume \cite{noel2014improving,noel2014optimal}.  The passive receiver model  easily encounters high ISI, since the same molecule may unavoidably contribute to the received signal many times in different symbol intervals. 
 
 % describe challenge in considering chemically-based receivers (local reactions complicate the solution of the reaction-diffusion equations). This has led to a lot of work that ignores reactions at the receiver or uses simplified reaction mechanisms
 
In a practical bio-inspired system, the surface of a receiver  is covered with  selective receptors, which  are sensitive to a specific type of information molecule (e.g., specific peptides or calcium ions). The surface of the receiver  may adsorb or  bind with this specific information molecule \cite{rospars2000perireceptor}. One example is  that the influx of calcium towards the  center of a receiver (e.g. cell) is induced by  the reception of a calcium signal \cite{bush2010nanoscale,Suda05exploratoryresearch}.
 
%% summarize other work with chemical reactions at receiver (you can find a list in my thesis, but definitely we need to acknowledge the work of Yilmaz and Chae et al). Note: the placement of this paragraph could go later if it's better for the flow of the discussion
 Despite  growing research efforts in MC, the chemical reaction receiver has not been accurately characterized in most of the literature except by Yilmaz \cite{Yilmaz2014effect,yilmaz2014three844,yilmaz2014simulation} and Chou \cite{chou2013extended}. The primary challenge is accommodating the local reactions in the reaction-diffusion equations. In \cite{yilmaz2014three844} and \cite{schulten2000lectures}, the channel impulse response for   MC  with an absorbing receiver was derived. {The MolecUlar CommunicatIoN (MUCIN) simulator was presented in \cite{yilmaz2014simulation} to verify the fully-absorbing receiver. The results in \cite{yilmaz2014three844,yilmaz2014simulation} were then extended to   the ISI mitigation problem for the fully-absorbing receiver \cite{Yilmaz2014effect}.}
 In \cite{chou2013extended}, the mean and variance of the receiver output was derived  for MC  with a reversible reaction receiver  based on the reaction-diffusion master equation (RDME). The  analysis and simulations were performed using the subvolume-based method, where the transmitter and receiver were cubes, and the exact locations or placement of individual molecules were not captured. 
They considered the reversible reactions only happens inside the receiver (cube) rather than at the surface of receiver. 
% motivate channel impulse response analysis and its importance for communications analysis

% motivate adsorption/desorption receiver (biological examples, molecules do not "disappear" as in absorption model). Why is it interesting, why do we consider it, and what makes the analysis challenging?
Unlike   existing work on MC, we consider the \emph{reversible adsorption and desorption} ({A$\&$D}) receiver, which is capable of \emph{adsorbing} a certain type of information molecule near its surface, and desorbing the information molecules previously adsorbed at its surface. {A$\&$D}  is a widely-observed  process for   colloids \cite{FEDER1980144}, proteins \cite{ramsden1993concentration}, and polymers \cite{fang2005kinetics}. 
Within the  Internet of Bio-NanoThings (IoBNT), biological cells  are 
usually regarded as the substrates of the Bio-NanoThings.  These biological cells will be capable of interacting with each other by exchanging information, such as sensed chemical or physical parameters and sets of instructions or commands \cite{akyildiz2015bionano}. Analyzing the performance characteristics  of MC systems using biological cells   equipped with adsorption and desorption receptors allows for the comparison, classification, optimization and realization of different techniques to realize the IoBNT. 
The A$\&$D  process also simplifies to the special case of an \emph{absorbing} receiver (i.e., with no desorption). 
For consistency in this paper, we refer to receivers that do not desorb, but have infinite or finite absorption rates, as \emph{fully-adsorbing} and \emph{partially-adsorbing} receivers, respectively.
%However, its modeling, analysis, and simulation  in the MC domain have rarely been investigated since the dynamic concentration change near the surface is more challenging than existing works with a passive receiver or an absorbing receiver. 

 From a theoretical perspective, researchers  have  derived the equilibrium concentration  of A$\&$D \cite{andrews2009accurate}, which is insufficient to model the time-varying channel impulse response (and ultimately the communications performance) of an {A$\&$D} receiver. 
Furthermore,  the simulation design for the  A$\&$D process of molecules at the surface of a \emph{planar} receiver   was also proposed   in \cite{andrews2009accurate}. However, the simulation procedure for a \emph{communication} model 
 with  a \emph{spherical} A$\&$D  receiver in a  fluid environment has never been solved and reported. In this model, information molecules are released by the transmission of pulses, propagate via free-diffusion through the channel, and contribute to the received signal via  A$\&$D at the receiver surface.
The challenges are the complexity in modeling the coupling effect of adsorption and desorption under diffusion, as well as accurately and dynamically tracking the location and the number of diffused molecules, adsorbed molecules and desorbed molecules (which are free to diffuse again).

% introduce our work by describing the model that we consider and the problem(s) that we solve. It's usually good to follow this with an enumerated list of contributions
%analytically with simulation modelling

Despite the aforementioned challenges, we  consider in this paper the  diffusion-based MC system with a point transmitter and an  {A$\&$D} receiver. The transmitter emits a certain number of information molecules at the start of  each symbol interval to represent the transmitted signal. These information molecules can adsorb to or desorb from the surface of the receiver. The number of information molecules adsorbed at the surface of the receiver is counted for information decoding. The  goal of this paper is to characterize the communications performance of an {A$\&$D}. Our major contributions are as follows:
\begin{enumerate}
\item We present an analytical model for the diffusion-based MC system with an {A$\&$D} receiver. We derive the exact expression for the channel impulse response at a spherical {A$\&$D} receiver in a three dimensional (3D) fluid environment  due to one instantaneous release of multiple molecules (i.e., single transmission).
\item  We derive the \emph{net}  number of adsorbed molecules expected at the surface of the {A$\&$D} receiver in any time duration. To measure the equilibrium concentration for a single transmission, we also derive the asymptotic  number of \emph{cumulative} adsorbed molecules expected at the surface of  {A$\&$D} receiver as time goes to infinity. 
\item  Unlike most literature  in \cite{noel2014optimal}, where the received signal is demodulated based on  the total number of molecules expected at the passive receiver,  we consider a simple demodulator based on the  net number of adsorbed molecules expected. When multiple bits are transmitted, the net number is more consistent than the total number.
\item  We apply the Skellam distribution to approximate  the net number of adsorbed molecules expected  at the surface of the {A$\&$D} receiver due to a single transmission of molecules. We formulate the bit error probability of the {A$\&$D} receiver using the Skellam distribution. Our results show the positive effect of adsorption rate and negative effect of desorption rate on the  error probability of {A$\&$D} receiver with last transmit bit-1.
\item We propose a simulation algorithm to simulate the diffusion, adsorption and desorption behavior of information molecules  based on a particle-based simulation framework.  Unlike  existing simulation platforms (e.g., Smoldyn \cite{andrews2010detailed}, N3sim \cite{llatser2011n3sim}), our simulation algorithm captures the  dynamic processes of  the MC system, which includes the   signal modulation, molecule free diffusion, molecule \text{A$\&$D} at the surface of the receiver, and signal demodulation. Our simulation results are in close agreement with the derived number of  adsorbed molecules expected. {Interestingly, we demonstrate that the error probability of the A$\&$D receiver for the last transmitted bit is worse at higher detection thresholds but better at low detection thresholds than both the full adsorption and partial adsorption receivers. This is because the A$\&$D receiver observes a lower peak number of adsorbed molecules but then a faster decay.}
\end{enumerate}

% (optionally) discuss our main assumptions and how they are realistic and/or compare with existing work
% the "usual" remaining sections paragraph
The rest of this paper is organized as follows. In Section II, we introduce the system model  with a single transmission at the transmitter and the {A$\&$D} receiver. In Section III, we present the channel impulse response of information molecules, i.e., the exact and asymptotic  number of adsorbed molecules
expected at the surface of the receiver. In Section IV, we derive the bit error probability of the proposed MC model due to multiple symbol intervals. In Section V, we  present the simulation framework. In Section VI, we discuss the numerical and simulation results. In Section VII, we conclude the contributions of this paper.

\section{System Model}
We consider a 3-dimensional (3D) diffusion-based MC system in a fluid environment with a point  transmitter and  a spherical A$\&$D  receiver. We assume  spherical symmetry where the transmitter is \emph{effectively} a spherical shell and the molecules are released from random points over the shell;  the actual angle to the transmitter when a molecule hits the receiver is ignored, so this assumption cannot accommodate a flowing environment.
 The point transmitter  is located at a distance $r_0$ from the center of the receiver and is at a distance $d=r_0-r_r$ from the nearest point on the surface of the receiver with radius $r_r$.   The extension to an asymmetric spherical model that accounts for the actual angle to the transmitter when a molecule hits the receiver complicates the derivation of the channel impulse response, and might be solved following \cite{scheider1972two}.  
 
 We assume all receptors are equivalent and can accommodate at most one adsorbed molecule.
 The ability of a molecule to adsorb at a given site is independent of the occupation of neighboring receptors.
The spherical receiver is assumed to have  no physical limitation on the number or placement of receptors on the receiver. Thus, there is no limit on the number of
molecules adsorbed to the receiver surface (i.e., we ignore saturation). This is an appropriate assumption for  a sufficiently low number of adsorbed molecules, or  for a sufficiently high concentration of receptors.
% \begin{enumerate}
%\item Adsorption cannot proceed beyond monolayer coverage.
%\item All receptors are equivalent and can accommodate at most one adsorbed molecule.
%\item The ability of a molecule to adsorb at a given site is independent of the occupation of neighboring receptors.
%\end{enumerate}

Once an information molecule binds to a
receptor site,  a physical response is activated to facilitate the counting of the molecule.
Generally,  due to  the non-covalent nature of  binding, in the dissociation process, the receptor may release
the adsorbed molecule to the 
fluid environment without changing its physical
characteristics, e.g., a ligand-binding receptor \cite{fan2005biosensors}.
  We also assume  perfect synchronization between the transmitter and the receiver as in most literature  \cite{kadloor2012molecular,kadloor2009framework,srinivas2012molecular,Yilmaz2014effect,chun2015optimal,yilmaz2014simulation,noel2014improving,yilmaz2014three844,noel2014optimal}. The system includes five processes: emission, propagation, reception, modulation and demodulation, which are detailed in the following.

\subsection{Emission}

The point transmitter releases one type of information molecule (e.g., hormones, pheromones) to the receiver for information transmission. The transmitter emits the information  molecules at $t=0$, where we define the initial condition as \cite[Eq.~(3.61)]{schulten2000lectures}
 \begin{align}
C\left( {r,\left. {t \to 0} \right|{r_0}} \right) = \frac{1}{{4\pi {r_0}^2}}\delta \left( {r - {r_0}} \right), \label{initial1}
\end{align}
where $C\left( {r,\left. {t \to 0} \right|{r_0}} \right)$  is the molecule distribution function at time ${t \to 0}$ and distance $r$ with initial distance $r_0$.

{We also define the first boundary condition as
 \begin{align}
\mathop {\lim }\limits_{r \to \infty } C\left( {r,\left. t \right|{r_0}} \right) = 0, \label{boundary3}
\end{align}
such that at arbitrary time,  the molecule distribution function   equals  zero when $r$ goes to infinity.}

%a molecule that diffuses sufficiently far away from the receiver is effectively removed from the fluid environment.

\subsection{Diffusion}

 Once the information molecules are emitted, they diffuse by randomly colliding with other molecules in the environment. This random motion is called Brownian motion \cite{howard1993radom630}.
The concentration of information molecules is assumed to be sufficiently low that the collisions between those information molecules are ignored \cite{howard1993radom630}, such that each information molecule diffuses independently with constant diffusion coefficient  $D$.
 The propagation model in a 3D environment is described by  Fick's second law \cite{howard1993radom630,yilmaz2014three844}:
\begin{align}
\frac{{\partial \left( {r \cdot C\left( {r,\left. t \right|{r_0}} \right)} \right)}}{{\partial t}} = D\frac{{{\partial ^2}\left( {r \cdot C\left( {r,\left. t \right|{r_0}} \right)} \right)}}{{\partial {r^2}}}, \label{ficklaw}
\end{align}
where  the diffusion coefficient 
%in \eqref{ficklaw} 
% is governed by the Einstein relation as \cite{philip2008biological587}
%\begin{align}
% D = \frac{{{k_B}T}}{{6\pi \eta {r_m}}}, \label{D}
%\end{align}
%where $T$ is temperature in kelvin, $\eta$ is the viscosity of the fluid environment, $r_m$ is the radius of information molecule, and $k_B$ is the Boltmann constant. This Einstein relation may lose accuracy in most realistic scenarios, and the diffusion coefficient of which 
is usually obtained via experiment \cite{philip2008biological587}.

\subsection{Reception}
We consider a reversible A$\&$D receiver that is capable of counting the net number of adsorbed molecules at the surface of the receiver.  Any molecule that  hits the receiver surface is either adsorbed to the receiver surface  or reflected back into the fluid environment, based on the adsorption rate $k_1$ (length$\times $time$^{-1}$). The adsorbed molecules   either desorb  or remain stationary at the surface of receiver, based on the desorption rate $k_{-1}$ (time$^{-1}$). 

At $t=0$, there are no information molecules at the receiver surface, so the second initial condition is 
\begin{align}
C\left( {{r_r},\left. 0 \right|{r_0}} \right)=0, \text{and}\; {C_a}\left( {\left. 0 \right|{r_0}} \right) = 0, \label{initial2}
\end{align}
where ${C_a}\left( {\left. t \right|{r_0}} \right)$ is the average concentration of  molecules that are adsorbed to the receiver surface at time $t$. 

 For the solid-fluid interface located at $r_r$, the second boundary condition of the information molecules is  \cite[Eq. (4)]{andrews2009accurate}
\begin{align}
{\left. {D\frac{{\partial \left( {C\left( {r,\left. t \right|{r_0}} \right)} \right)}}{{\partial r}}} \right|_{r = r_r^ + }} = {k_1}C\left( {{r_r},\left. t \right|{r_0}} \right) - {k_{ - 1}}{C_a}\left( {\left. t \right|{r_0}} \right), \label{boundary1}
\end{align}	
which accounts for the adsorption and desorption reactions that can occur at the surface of the receiver.

Most generally, when both $k_1$ and $k_{-1}$ are  non-zero finite constants, \eqref{boundary1} is the boundary condition for the A$\&$D  receiver.
When $k_1 \to \infty$ and $k_{-1}=0$, \eqref{boundary1} is the boundary condition for the full adsorption (or fully-adsorbing) receiver, whereas when $k_1$ is a non-zero finite constant and $k_{-1}=0$, \eqref{boundary1} is the boundary condition for the partial adsorption (or partially-adsorbing) receiver. In these two special cases with $k_{-1}=0$, the lack of desorption results in more effective adsorption. Here, the adsorption rate $k_1$ is approximately limited to the thermal velocity of potential adsorbents (e.g., $k_1 < 7 \times 10^6\,\rm{\mu m /s}$ for a 50 kDa protein at 37 $^\circ$C) \cite{andrews2009accurate}; the desorption rate $k_{-1}$  is typically  between  $10^{-4}\,\rm{s^{-1}} $ and  $10^{4}\,\rm{s^{-1}}$ \cite{Tom2007Multistage}.

The surface concentration ${C_a}\left( {\left. t \right|{r_0}} \right)$ changes over time as follows:
\begin{align}
\frac{{\partial {C_a}\left( {\left. t \right|{r_0}} \right)}}{{\partial t}} = {\left. {D\frac{{\partial \left( {C\left( {r,\left. t \right|{r_0}} \right)} \right)}}{{\partial r}}} \right|_{r = r_r^ + }}, \label{boundary2}
\end{align}
which shows that the change  in the adsorbed concentration over time is equal to the flux of diffusion molecules towards the surface. 

Combining \eqref{boundary1} and  \eqref{boundary2}, we write
\begin{align}
\frac{{\partial {C_a}\left( {\left. t \right|{r_0}} \right)}}{{\partial t}} = {k_1}C\left( {{r_r},\left. t \right|{r_0}} \right) - {k_{ - 1}}{C_a}\left( {\left. t \right|{r_0}} \right), \label{boundary3}
\end{align}
which is known as the Robin or radiation boundary condition \cite{john1980math586,third1998gusta} and shows that the equivalent adsorption rate is proportional to the molecule concentration at the surface.

\subsection{Modulation and Demodulation}
In this model, we consider the widely applied amplitude-based modulation---concentration shift keying (CSK) \cite{yilmaz2014simulation,kuran2011modulation,Yilmaz2014effect,noel2014improving,kim2013novel}, where the concentration of information molecules is interpreted as the amplitude of the signal. Specifically, we utilize Binary CSK, where the transmitter emits $N_1$ molecules at the start of the bit interval
to represent the transmit bit-1, and emits $N_2$ molecules at the start of the bit interval to represent the transmit bit-0. To reduce the energy consumption and make the received signal more distinguishable, we assume  that $N_1=N_{\rm{tx}}$ and $N_2=0$.

We assume that the receiver is able to count the \emph{net} number of information molecules that are adsorbed to  the surface of the receiver in any sampling period by subtracting the number of
molecules bound to the surface of the receiver at the end of previous sampling time from
that at the end of current sampling time.
 The net number of adsorbed molecules  over a bit interval is then demodulated as the received signal for that bit interval. {This approach is in contrast to \cite{Yilmaz2014effect}, where the cumulative number of molecule arrivals in each symbol duration was demodulated as the received signal  (i.e., cumulative counter is reset to zero at each symbol duration).} We claim (and our results will demonstrate) that our approach is more appropriate for a simple demodulator.
 Here, we write the  net number of adsorbed molecules measured by the receiver in the $j$th  bit interval as ${N_{\rm{new}}^{\rm{Rx}}\left[ j \right] }$, and the decision threshold for the number of received molecules is $N_{\rm{th}}$. Using threshold-based demodulation, the receiver demodulates the received signal as bit-1 if ${N_{\rm{new}}^{\rm{Rx}}\left[ j \right] }  \ge N_{\rm{th}}$, and  demodulates the received signal as bit-0 if ${N_{\rm{new}}^{\rm{Rx}}\left[ j \right] } < N_{\rm{th}}$.

\section{Receiver Observations}
In this section, we first derive the spherically-symmetric spatial distribution  ${C\left( {r,\left. t \right|{r_0}} \right)}
$, which is the probability of finding a molecule at distance $r$ and time $t$. We then derive the  flux at  the surface of the A$\&$D receiver, from which we  derive the exact and asymptotic  number of adsorbed molecules expected at the surface of the receiver.

\subsection{Exact Results}
The time-varying spatial distribution of information molecules at the surface of the receiver is an important statistic for capturing the molecule concentration in the diffusion-based MC system. We solve it in the following theorem.

\begin{theorem}
The expected time-varying spatial distribution of an information molecule released into a 3D fluid environment with a reversible adsorbing receiver is given by
\begin{align}
C\left( {r,\left. t \right|{r_0}} \right) = & \; \frac{1}{{8\pi {r_0}r\sqrt {\pi Dt} }}\exp \left\{ { - \frac{{{{\left( {r - {r_0}} \right)}^2}}}{{4Dt}}} \right\}
\nonumber\\&  + \frac{1}{{8\pi {r_0}r\sqrt {\pi Dt} }}\exp \left\{ { - \frac{{{{\left( {r + {r_0} - 2{r_r}} \right)}^2}}}{{4Dt}}} \right\}
\nonumber\\&  - \frac{1}{{2\pi r}}\int_0^\infty  {\left( {{e^{ - jwt}}\varphi _Z^*\left( w \right) + {e^{jwt}}{\varphi _Z}\left( w \right)} \right){\mathrm{d}}w},
\label{der20}
\end{align}
where 
 \begin{align}
{\varphi _Z}\left( w \right) = & \;Z\left( {jw} \right) = \frac{{2\left( {\frac{1}{{{r_r}}} + \frac{{{k_1}jw}}{{D\left( {jw + {k_{-1}}} \right)}}} \right)}}{{\left( {\frac{1}{{{r_r}}} + \frac{{{k_1}jw}}{{D\left( {jw + {k_{-1}}} \right)}} + \sqrt {\frac{{jw}}{D}} } \right)}}
\nonumber\\& \times \frac{1}{{8\pi {r_0}\sqrt {Djw} }}\exp \left\{ { - \left( {r + {r_0} - 2{r_r}} \right)\sqrt {\frac{{jw}}{D}} } \right\},
\label{der19}
\end{align}
and $\varphi _Z^*\left( w \right)$ is the complex conjugate of $\varphi _Z\left( w \right)$.
\end{theorem}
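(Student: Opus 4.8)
The plan is to map the spherically-symmetric problem onto a one-dimensional heat equation, solve it in the Laplace domain where the surface reaction reduces to an ordinary boundary condition, and then invert term by term. First I would substitute $u(r,t|r_0)=r\,C(r,t|r_0)$. By \eqref{ficklaw} this satisfies the plain heat equation $\partial_t u=D\,\partial_{rr}u$; the decay condition at infinity becomes $u\to0$ as $r\to\infty$, and the point release \eqref{initial1} becomes $u(r,0|r_0)=\frac{1}{4\pi r_0}\delta(r-r_0)$ after using $r\,\delta(r-r_0)=r_0\,\delta(r-r_0)$.

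Next, applying a Laplace transform in $t$ (overbars denote transforms, and write $q=\sqrt{s/D}$), the transformed PDE is $D\,\partial_{rr}\bar u-s\,\bar u=-\frac{1}{4\pi r_0}\delta(r-r_0)$. Combining \eqref{boundary1} with \eqref{boundary2} and using $C_a(0|r_0)=0$ gives $\bar C_a=\frac{k_1}{s+k_{-1}}\bar C(r_r,s)$; eliminating $C_a$ from the flux relation \eqref{boundary1} and rewriting in terms of $\bar u$ collapses the reaction into a single Robin condition at the surface,
\[
\partial_r\bar u\big|_{r_r}=\Big(\tfrac{1}{r_r}+\tfrac{k_1 s}{D(s+k_{-1})}\Big)\bar u(r_r,s)=:H(s)\,\bar u(r_r,s).
\]

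I would then solve this second-order ODE with a point source by superposition, guided by the three-term structure of \eqref{der20}. Writing $\bar u=\bar u_{\mathrm d}+\bar u_{\mathrm i}-\bar u_{\mathrm c}$, take $\bar u_{\mathrm d}=\frac{1}{4\pi r_0}\frac{1}{2\sqrt{Ds}}e^{-q|r-r_0|}$ (the free-space Green's function) and a same-sign image reflected about $r_r$, $\bar u_{\mathrm i}=\frac{1}{4\pi r_0}\frac{1}{2\sqrt{Ds}}e^{-q(r+r_0-2r_r)}$, so that $\bar u_{\mathrm d}+\bar u_{\mathrm i}$ satisfies the homogeneous Neumann condition $\partial_r\bar u|_{r_r}=0$. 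The remaining correction $\bar u_{\mathrm c}=B(s)\,e^{-q(r-r_r)}$ must then carry the full Robin flux; imposing $H(s)$ on the total field fixes $B(s)=\frac{H(s)}{q+H(s)}\frac{1}{4\pi r_0\sqrt{Ds}}e^{-q(r_0-r_r)}$, and one checks that $\bar u_{\mathrm c}(r,s)=Z(s)$, whose value at $s=jw$ is the $\varphi_Z(w)$ of \eqref{der19}.

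Finally I would invert. The direct and image terms invert in closed form through the standard pair $\mathcal L^{-1}\{\frac{1}{2\sqrt{Ds}}e^{-aq}\}=\frac{1}{\sqrt{4\pi Dt}}e^{-a^2/4Dt}$, which (after dividing by $r$) reproduces the first two Gaussians of \eqref{der20}. The correction $Z(s)$ has a $\sqrt{s}$ branch point and no elementary inverse, so I would represent its inverse as a Bromwich integral and deform the contour onto the imaginary axis $s=jw$; because $C$ is real, the $w$ and $-w$ halves combine into the conjugate-symmetric integrand, giving the final term $-\frac{1}{2\pi r}\int_0^\infty(e^{-jwt}\varphi_Z^*(w)+e^{jwt}\varphi_Z(w))\,\mathrm dw$. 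The main obstacle is exactly this last inversion: justifying the contour deformation (analyticity of $Z$ off the cut, decay of $Z(s)e^{st}$ on the closing arcs, and correct treatment of the $\sqrt{s}$ branch) and confirming that the frequency-pairing yields the stated real integral. Solving for $B(s)$ and matching the constants against the Neumann solution is routine by comparison.
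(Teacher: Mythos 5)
Your proposal is correct and is, in substance, the paper's own argument (Appendix A): reduce to the one-dimensional heat equation in $u=rC$, pass to the Laplace domain, eliminate $\tilde C_a$ so that the surface kinetics collapse into the Robin multiplier $H(s)=\frac{1}{r_r}+\frac{k_1 s}{D(s+k_{-1})}$ as in \eqref{der1111}--\eqref{der13}, solve the resulting ODE, and invert --- the two Gaussians via the standard pair and $Z(s)$ as an imaginary-axis frequency integral. The differences are mostly packaging. The paper splits $rC=rg+rh$ in the time domain (\eqref{der1}), solves the singular part $g$ by Fourier transform before Laplace-transforming, and fixes a single homogeneous coefficient $f(s)$ from the Robin condition in \eqref{der15}; your Neumann-image-plus-correction ansatz is equivalent, since the identity $\frac{q-H}{q+H}=1-\frac{2H}{q+H}$ turns the paper's $f(s)e^{-qr}$ into exactly your $\bar u_{\mathrm{i}}-\bar u_{\mathrm{c}}$ (which is how the three-term form \eqref{der16} arises), and your $B(s)e^{-q(r-r_r)}$ coincides with $Z(s)$. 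The one step where you genuinely diverge is the inversion of $Z(s)$: you propose a Bromwich integral deformed onto $s=jw$, whereas the paper invokes the Gil--Pelaez theorem, treating $\varphi_Z(w)=Z(jw)$ as a characteristic function, writing the ``CDF'' $F_z(t)$ in \eqref{der181}, and differentiating to obtain \eqref{der18}. These routes produce the identical conjugate-symmetric integral, and neither is more rigorous as written: the paper never verifies the Gil--Pelaez hypotheses (indeed $Z(s)\sim s^{-1/2}$ as $s\to 0$, so $f_z$ is not integrable to one and $\varphi_Z$ is not literally a characteristic function of a probability law), just as your route would still need the analyticity of $Z$ off the branch cut and decay on the closing arcs --- note the latter does hold here because $\re\big[\sqrt{jw/D}\big]>0$ makes $\varphi_Z(w)$ decay exponentially for $r+r_0>2r_r$. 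So the obstacle you flag is real, but it is present, merely relabeled, in the paper's proof as well.
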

\begin{proof}
See Appendix A.
\end{proof}
Our results in \textbf{Theorem 1} can be easily computed using Matlab. 
 We observe that \eqref{der20} reduces to an \emph{absorbing} receiver \cite[Eq.~(3.99)]{schulten2000lectures} when there is no desorption (i.e., $k_{-1}=0$).

To characterize the number of information molecules adsorbed to the surface of the receiver using ${ C\left( {\left. {r,t} \right|{r_0}} \right)}$,
we define the  rate of the coupled reaction (i.e.,  adsorption and desorption)  at the surface of the A$\&$D receiver   as \cite[Eq.~(3.106)]{schulten2000lectures}
\begin{align}
K\left( {\left. t \right|{r_0}} \right) = 4\pi r_r^2D{\left. {\frac{{\partial C\left( {\left. {r,t} \right|{r_0}} \right)}}{{\partial r}}} \right|_{r = {r_r}}}.
\label{der21}
\end{align}

\begin{cor}
The rate of the coupling reaction at  the surface of a reversible adsorbing receiver is given by
\begin{align}
K\left( {\left. t \right|{r_0}} \right) = & \; 2{r_r}D{\int_{ 0 }^\infty  {{e^{ - jwt}}\left[ {\sqrt {\frac{{jw}}{D}} {\varphi _Z}\left( w \right)} \right]} ^*}dw
\nonumber\\& + 2{r_r}D{\int_{ 0 }^\infty  {{e^{  jwt}}\left[ {\sqrt {\frac{{jw}}{D}} {\varphi _Z}\left( w \right)} \right]} }dw,
\label{der22}
\end{align}
where ${\varphi _Z}\left( w \right)$ is as given in \eqref{der19}. 
\end{cor}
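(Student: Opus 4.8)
The plan is to substitute the spatial distribution \eqref{der20} straight into the definition \eqref{der21} and carry out the radial differentiation term by term, exploiting the fact that every term in \eqref{der20} carries an explicit $1/r$ prefactor. Writing $C(r,t|r_0)=T_1+T_2+T_3$ for the two Gaussian image terms and the inverse-transform term, I would first differentiate $T_1$ and $T_2$. Each produces one piece from the $1/r$ prefactor (scaling like $1/r^2$) and one piece from the Gaussian exponent. The key observation is that at $r=r_r$ the two exponents $(r-r_0)^2$ and $(r+r_0-2r_r)^2$ coincide, while their $r$-derivatives carry opposite signs (one involves $r_r-r_0$, the other $r_0-r_r$); hence the exponent-derivative contributions of $T_1$ and $T_2$ are equal and opposite and cancel exactly, leaving only their $1/r^2$ pieces.

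For the integral term $T_3=-\tfrac{1}{2\pi r}\int_0^\infty(e^{-jwt}\varphi_Z^*+e^{jwt}\varphi_Z)\,dw$, I would use that $\varphi_Z(w)$ depends on $r$ only through the factor $\exp\{-(r+r_0-2r_r)\sqrt{jw/D}\}$ in \eqref{der19}, so that $\partial\varphi_Z/\partial r=-\sqrt{jw/D}\,\varphi_Z$ and analogously $\partial\varphi_Z^*/\partial r=-[\sqrt{jw/D}\,\varphi_Z]^*$. Differentiating $T_3$ then yields two contributions: a $1/r^2$ piece from the prefactor, and a piece in which the derivative lands on $\varphi_Z$ and supplies the factor $\sqrt{jw/D}$. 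Multiplying this second piece by $4\pi r_r^2 D$ and cancelling the $2\pi r_r$ in the prefactor reproduces exactly the coefficient $2r_r D$ and the integrand $\sqrt{jw/D}\,\varphi_Z$ (with its conjugate) that appear in \eqref{der22}; this is the surviving term of the corollary.

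What remains are the leftover $1/r^2$ pieces of $T_1,T_2$ and $T_3$ evaluated at $r=r_r$, and I expect handling these — rather than the routine differentiation — to be the main obstacle. Collecting them, the combination reduces to a multiple of the surface value $C(r_r,t|r_0)$ (up to the inverse-transform identity that defines $\varphi_Z$ in Appendix A), so the task is to show that this residual surface contribution is disposed of by the reception boundary condition \eqref{boundary1}. Because this step couples the explicit image terms to the transform representation of $\varphi_Z$, it is where the construction of $\varphi_Z$ and the radiation boundary condition must be invoked together; the fully-adsorbing special case ($k_{-1}=0$, $k_1\to\infty$), in which the surface value vanishes and \eqref{der22} should collapse to the known absorbing-sphere reaction rate, provides a clean consistency check on the bookkeeping.
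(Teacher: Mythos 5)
Your differentiation is correct as far as it goes, and it is in fact more careful than the paper's own proof, which consists of the single sentence ``by substituting \eqref{der20} into \eqref{der21}, we derive \eqref{der22}.'' Carrying out that substitution exactly as you describe: the exponent-derivative pieces of the two Gaussian terms cancel at $r=r_r$, the piece of the integral term in which $\partial/\partial r$ lands on $\varphi_Z$ produces precisely the right-hand side of \eqref{der22} (coefficient $2r_rD$ included), and the leftover $1/r^2$ pieces of all three terms sum to $-\tfrac{1}{r_r}$ times the right-hand side of \eqref{der20} evaluated at $r=r_r$. Writing $\mathcal{R}(t)$ for the right-hand side of \eqref{der22}, the exact outcome of the substitution is therefore
\begin{align*}
4\pi r_r^2 D\left.\frac{\partial C\left(r,\left.t\right|r_0\right)}{\partial r}\right|_{r=r_r}
= \mathcal{R}(t) - 4\pi r_r D\, C\left(r_r,\left.t\right|r_0\right),
\end{align*}
so you have correctly isolated the residual as a multiple of the surface concentration.

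The gap is your final step: that residual is \emph{not} disposed of by the reception boundary condition. The Robin condition \eqref{boundary1} equates the surface flux to $k_1C\left(r_r,\left.t\right|r_0\right)-k_{-1}C_a\left(\left.t\right|r_0\right)$; it does not force $C\left(r_r,\left.t\right|r_0\right)=0$. The surface concentration vanishes only in the fully-adsorbing limit $k_1\to\infty$, which is exactly the consistency check you propose, and that is the only case in which the residual dies. One can confirm the discrepancy in the Laplace domain: with $\mu=\sqrt{s/D}$ and $\beta(s)=\tfrac{1}{r_r}+\tfrac{k_1 s}{D(s+k_{-1})}$, the transform of $\mathcal{R}(t)$ is $\tfrac{r_r\beta}{r_0(\beta+\mu)}e^{-(r_0-r_r)\mu}$, while the transform of $4\pi r_r^2D\,\partial_r C|_{r=r_r}$ is $\tfrac{r_r\beta-1}{r_0(\beta+\mu)}e^{-(r_0-r_r)\mu}$; their difference, $\tfrac{1}{r_0(\beta+\mu)}e^{-(r_0-r_r)\mu}=4\pi r_rD\,\tilde C\left(r_r,\left.s\right|r_0\right)$, is nonzero for every finite $k_1$, with or without desorption. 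What your bookkeeping actually reveals is that \eqref{der22} is the quantity $4\pi r_rD\,\partial_r\!\left(rC\right)|_{r=r_r}$ rather than $4\pi r_r^2D\,\partial_r C|_{r=r_r}$, the two differing by the surface term above; the paper's one-line proof silently makes this identification and never confronts the residual you found. Consequently your route cannot be completed as written: the step you flagged as the ``main obstacle'' is not a step awaiting a trick, but a genuine discrepancy term that a faithful proof must either drop as an explicit approximation or eliminate by restricting to the fully-adsorbing case.
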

\begin{proof}
By substituting \eqref{der20} into \eqref{der21}, we derive the coupling reaction  rate at the surface of an A$\&$D receiver as \eqref{der22}.
\end{proof}

From \textbf{Corollary 1}, we can derive the  net change in the number of adsorbed molecules expected for any time interval in the following theorem.

\begin{theorem}
With a single emission at $t=0$, the  net change in the number of  adsorbed molecules expected  at the surface of the A$\&$D receiver during the interval [$T$, $T$+$T_{s} $]  is derived as 
\begin{align}
& \mathbb{E}\left[ {{N_{\rm{A\&D}}}\left( {\left. {{\Omega _{{r_r}}},T,T + T_{s}} \right|{r_0}} \right)} \right] = 2{r_r}{N_{\rm{tx}}}D \nonumber\\
  &     \hspace{0.3cm} \times \Bigg[ {\int_{ 0 }^\infty  \frac{{{e^{ - jwT}} - {e^{ - jw\left( {T + T_{s}} \right)}}}}{{jw}}\Big[ {\sqrt {\frac{{jw}}{D}} {\varphi _Z}\left( w \right)} \Big] ^*}dw \Bigg.
  \nonumber\\
  &     \hspace{0.3cm} \Bigg.+  {\int_{ 0 }^\infty  {\frac{{{e^{  jw\left( {T + T_{s}} \right)}}-{e^{  jwT}} }}{{jw}}\Big[ {\sqrt {\frac{{jw}}{D}} {\varphi _Z}\left( w \right)} \Big]} }dw \Bigg]
  ,
\label{der241}
\end{align}
where ${\varphi _Z}\left( w \right)$ is given in \eqref{der19},  $T_{s}$ is the sampling time, and  ${\Omega _{{r_r}}}$ represents the spherical receiver with radius $r_r$.
\end{theorem}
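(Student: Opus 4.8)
The plan is to obtain the net change in adsorbed molecules by integrating the coupling reaction rate $K\left(\left.t\right|r_0\right)$ of \textbf{Corollary 1} over the sampling window $[T,T+T_s]$ and then scaling by the number of emitted molecules. First I would make the bridge between \textbf{Theorem 2} and \textbf{Corollary 1} explicit. By the definition \eqref{der21} together with the surface flux condition \eqref{boundary2}, the reaction rate satisfies $K\left(\left.t\right|r_0\right) = 4\pi r_r^2 \frac{\partial C_a\left(\left.t\right|r_0\right)}{\partial t}$, so $K$ is precisely the net rate of change of the total adsorbed amount (per emitted molecule, the rate of change of the adsorption probability). Integrating over the window therefore yields $\int_T^{T+T_s} K\left(\left.t\right|r_0\right)\,dt = 4\pi r_r^2\left[C_a\left(\left.T+T_s\right|r_0\right) - C_a\left(\left.T\right|r_0\right)\right]$, i.e., the net number adsorbed (net of desorption) during $[T,T+T_s]$ for a single emission. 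Because each of the $N_{\rm{tx}}$ emitted molecules diffuses independently, linearity of expectation gives $\mathbb{E}\left[N_{\rm{A\&D}}\left(\left.\Omega_{r_r},T,T+T_s\right|r_0\right)\right] = N_{\rm{tx}}\int_T^{T+T_s} K\left(\left.t\right|r_0\right)\,dt$.

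Next I would substitute the closed form \eqref{der22} for $K\left(\left.t\right|r_0\right)$ into this time integral, producing a double integral over the sampling variable $t$ and the inversion variable $w$. After interchanging the order of integration, the inner time integrals collapse to the elementary primitives $\int_T^{T+T_s} e^{-jwt}\,dt = \frac{e^{-jwT}-e^{-jw(T+T_s)}}{jw}$ and $\int_T^{T+T_s} e^{jwt}\,dt = \frac{e^{jw(T+T_s)}-e^{jwT}}{jw}$. Pulling out the common prefactor $2r_r N_{\rm{tx}} D$ and reattaching the kernels $\left[\sqrt{jw/D}\,\varphi_Z(w)\right]^*$ and $\left[\sqrt{jw/D}\,\varphi_Z(w)\right]$ then reproduces exactly the two $w$-integrals displayed in \eqref{der241}.

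The hard part will be justifying the interchange of the $t$- and $w$-integrations, which requires a Fubini (or dominated-convergence) argument on the joint integrand. The delicate region is $w\to 0$: there $\varphi_Z(w)\sim 1/\sqrt{jw}$, so the kernel $\sqrt{jw/D}\,\varphi_Z(w)$ remains bounded, but the factor $1/(jw)$ introduced by the time integration threatens a singularity. This is resolved by observing that the time-integrated numerator vanishes linearly, $\frac{e^{-jwT}-e^{-jw(T+T_s)}}{jw}\to T_s$ as $w\to 0$, so the combined integrand stays bounded near the origin. For the tail $w\to\infty$, the factor $\exp\{-(r+r_0-2r_r)\sqrt{jw/D}\}$ inside $\varphi_Z(w)$ enforces rapid decay. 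Together these bounds establish absolute integrability of the joint integrand and thereby legitimize the Fubini step, completing the derivation of \eqref{der241}.
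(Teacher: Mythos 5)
Your proposal is correct and follows essentially the same route as the paper: the paper defines the cumulative fraction $R_{\rm{A\&D}}\left( {\left. {{\Omega _{{r_r}}},T} \right|{r_0}} \right) = \int_0^T K\left( {\left. t \right|{r_0}} \right)dt$ from \textbf{Corollary 1}, scales by $N_{\rm{tx}}$, and differences it at $T+T_s$ and $T$, which is algebraically identical to your direct integration of $K$ over $[T, T+T_s]$ followed by the interchange of the $t$- and $w$-integrations. Your explicit Fubini justification near $w\to 0$ and $w\to\infty$ is a point of rigor the paper simply omits.
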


\begin{proof}
The cumulative fraction of particles that are adsorbed to the receiver surface at time $T$  is expressed as
\begin{align}
{{R_{\rm{A\&D}}}\left( {\left. {{\Omega _{{r_r}}},T} \right|{r_0}} \right)}&  = \int_0^T {K\left( {\left. t \right|{r_0}} \right)dt} 
\nonumber\\& \hspace{-2cm} = 2{r_r}D \Bigg[{\int_{0 }^\infty  {\frac{{1 - {e^{ - jwT}}}}{{jw}}\left[ {\sqrt {\frac{{jw}}{D}} {\varphi _Z}\left( w \right)} \right]} ^*}dw \Bigg.
\nonumber\\& \hspace{-1.6cm} \Bigg.+{\int_{ 0 }^\infty  {{\frac{{{e^{jwT}} - 1}}{{jw}}}\left[ {\sqrt {\frac{{jw}}{D}} {\varphi _Z}\left( w \right)} \right]} }dw \Bigg].
\label{der23}
\end{align}

Based on \eqref{der23}, the  net change in adsorbed molecules expected at the receiver surface during the interval [$T$, $T$+$T_{s} $] is defined as
\begin{align}
\mathbb{E}&\left[ {{N_{{\rm{A\&D}}}}\left( {\left. {{\Omega _{{r_r}}},T,T + T_{s}} \right|{r_0}} \right)} \right] = 
\nonumber\\& {N_{\rm{tx}}}{R_{\rm{A\&D}}}\left( {\left. {{\Omega _{{r_r}}},T + T_{s}} \right|{r_0}} \right)   - {N_{\rm{tx}}}{R_{\rm{A\&D}}}\left( {\left. {{\Omega _{{r_r}}},T} \right|{r_0}} \right).
\label{der24}
\end{align}

Substituting \eqref{der23} into \eqref{der24}, we derive the expected net change of adsorbed molecules during any observation interval as \eqref{der241}.
\end{proof}

Note that the net change in the number of adsorbed molecules in each \emph{bit interval}  will be recorded at the receiver, which will be converted to  the recorded net change of adsorbed molecules in each \emph{bit interval}, and   compared with the decision threshold $N_{\rm{th}}$ to demodulate the received signal (the sampling interval is smaller than one bit interval).

\subsection{Asymptotic Behavior: Equilibrium Concentration}

In this section, we are interested in the asymptotic number of adsorbed molecules due to a single emission as  $T_b$ goes to infinity, i.e., the concentration of adsorbed molecules at the steady state.  Note that this asymptotic concentration of adsorbed molecules  is an important quantity that influences the  number of adsorbed molecules expected in  subsequent bit intervals, and we have assumed that the receiver surface has infinite receptors.
 Thus, in the remainder of this section,
we  derive the cumulative number of adsorbed molecules expected at the surface of the  A$\&$D receiver, the partial adsorption  receiver, and the  full adsorption  receiver,
%adsorbed at the A$\&$D  receiver
 as ${T_b} \to \infty$.

\subsubsection{Reversible A$\&$D  Receiver}
\begin{lemma}
As $T_b \to \infty $, the cumulative number of adsorbed molecules  expected at the A$\&$D  receiver  simplifies to
\begin{align}
\mathbb{E} &\left[ {N_{\rm{A\&D}}}\left( {\left. {{\Omega _{{r_r}}},T_b \to \infty} \right|{r_0}} \right) \right] = \frac{{N_{\rm{tx}}}{{r_r}}}{{2 {r_0}}}
 \nonumber\\&  - 4{N_{\rm{tx}}}{r_r}D\int_0^\infty  {\frac{1}{w}{\mathop{\rm Im}\nolimits} \left[ {\sqrt {\frac{{jw}}{D}} {\varphi _Z}\left( w \right)} \right]} dw.
\label{der25}
\end{align}

%Note that the derived results in \eqref{der25} is a constant for predetermined parameters.
\end{lemma}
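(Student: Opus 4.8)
The plan is to evaluate the limit directly in the cumulative fraction $R_{\mathrm{A\&D}}(\Omega_{r_r},T|r_0)$ of \eqref{der23}, since the asymptotic count is $\mathbb{E}\!\left[N_{\mathrm{A\&D}}(\Omega_{r_r},T_b\to\infty|r_0)\right] = N_{\mathrm{tx}}\lim_{T_b\to\infty}R_{\mathrm{A\&D}}(\Omega_{r_r},T_b|r_0)$. First I would write $h(w)=\sqrt{jw/D}\,\varphi_Z(w)$ and observe that the two integrands in \eqref{der23} are complex conjugates of one another, so that $R_{\mathrm{A\&D}}(T)=4r_rD\,\mathrm{Re}\!\left[\int_0^\infty \tfrac{1-e^{-jwT}}{jw}h^\ast(w)\,\mathrm{d}w\right]$. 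This collapses the two-term expression into a single manifestly real integral and isolates where the $T\to\infty$ limit acts.

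Next I would split the oscillatory kernel into real and imaginary parts, $\tfrac{1-e^{-jwT}}{jw}=\tfrac{\sin(wT)}{w}-j\tfrac{1-\cos(wT)}{w}$, and take the real part of the product with $h^\ast(w)$, giving $R_{\mathrm{A\&D}}(T)=4r_rD\int_0^\infty\!\left[\tfrac{\sin(wT)}{w}\mathrm{Re}\,h(w)-\tfrac{1-\cos(wT)}{w}\mathrm{Im}\,h(w)\right]\mathrm{d}w$. I would then pass to the limit term by term. For the second term the $\cos(wT)$ contribution vanishes by the Riemann--Lebesgue lemma, leaving $\int_0^\infty \tfrac{\mathrm{Im}\,h(w)}{w}\,\mathrm{d}w$; this requires $\mathrm{Im}\,h(w)/w$ to be integrable at the origin, which holds because the $\sqrt{jw}$ corrections force $\mathrm{Im}\,h(w)=O(\sqrt{w})$. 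For the first term I would invoke the generalized Dirichlet integral $\lim_{T\to\infty}\int_0^\infty\tfrac{\sin(wT)}{w}f(w)\,\mathrm{d}w=\tfrac{\pi}{2}f(0^+)$, so it converges to $\tfrac{\pi}{2}\mathrm{Re}\,h(0^+)$ rather than to zero.

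The last ingredient is the boundary value $h(0^+)$. Evaluating \eqref{der19} at the surface $r=r_r$ (so $r+r_0-2r_r=r_0-r_r$), as $w\to0$ both $\tfrac{k_1jw}{D(jw+k_{-1})}$ and $\sqrt{jw/D}$ tend to zero and the exponential tends to one, so the bracketed ratio reduces to $2$ and $\varphi_Z(w)\to \tfrac{1}{4\pi r_0\sqrt{Djw}}$; multiplying by $\sqrt{jw/D}$ then gives the real value $h(0^+)=\tfrac{1}{4\pi r_0D}$. Substituting $\mathrm{Re}\,h(0^+)=\tfrac{1}{4\pi r_0D}$ yields $4r_rD\cdot\tfrac{\pi}{2}\cdot\tfrac{1}{4\pi r_0D}=\tfrac{r_r}{2r_0}$; assembling both terms and multiplying by $N_{\mathrm{tx}}$ reproduces \eqref{der25}.

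The main obstacle is the pole at $w=0$: because of the $1/w$ factor the oscillatory terms cannot be discarded wholesale by Riemann--Lebesgue. The crux is recognizing that $\sin(wT)/w$ behaves as a nascent Dirac delta at the origin and contributes the finite, $T$-independent value $\tfrac{\pi}{2}\mathrm{Re}\,h(0^+)$ --- this is precisely the source of the nonzero equilibrium term $\tfrac{N_{\mathrm{tx}}r_r}{2r_0}$ --- whereas the $\cos(wT)/w$ term must be shown to vanish. Some care is needed to confirm that $h$ is of bounded variation near the origin and that the relevant integrands lie in $L^1$, so that both the Dirichlet-limit and Riemann--Lebesgue steps are justified.
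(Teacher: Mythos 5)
Your proposal is correct and follows essentially the same route as the paper's proof: both collapse the two conjugate integrals in \eqref{der23} into a single real part, split the kernel into $\sin(wT)/w$ and $(1-\cos(wT))/w$ pieces against $\mathrm{Re}$ and $\mathrm{Im}$ of $q(w)=\sqrt{jw/D}\,\varphi_Z(w)|_{r=r_r}$, send the oscillatory pieces to $\tfrac{\pi}{2}\,\mathrm{Re}\,q(0^+)$ and $0$ respectively (the paper via the substitution $z=wT_b$ and $q(z/T_b)\to q(0)$ with $\mathrm{Im}\,q(0)=0$, you via the generalized Dirichlet integral and Riemann--Lebesgue, which are the same facts), and evaluate $q(0^+)=\tfrac{1}{4\pi r_0 D}$ to get the $\tfrac{N_{\mathrm{tx}}r_r}{2r_0}$ term while the static $-\int_0^\infty \mathrm{Im}\,q(w)/w\,dw$ term survives. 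Your explicit attention to integrability near $w=0$ (via $\mathrm{Im}\,q(w)=O(\sqrt{w})$) is a slightly more careful justification of the same limit interchange the paper performs implicitly.
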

\begin{proof}
We express  the cumulative fraction of particles  adsorbed to the surface of the A$\&$D   receiver at time $T_b$ in \eqref{der23} as
\begin{align}
&{{R_{\rm{A\&D}}}\left( {\left. {{\Omega _{{r_r}}},{T_b}} \right|{r_0}} \right)} \nonumber\\& = {\mathop{\rm Re}\nolimits} \left[ {4{r_r}D\int_0^\infty  {\frac{{{e^{jw{T_b}}} - 1}}{{jw}}\left( {\sqrt {\frac{{jw}}{D}} {\varphi _Z}\left( w \right)} \right)} dw} \right]
%\nonumber\\& = 4{r_r}D{\mathop{\rm Re}\nolimits} \left[ {\int_0^\infty  {\frac{{j\sin w{T_b} + \cos w{T_b} - 1}}{{jw}}\left( {\sqrt {\frac{{jw}}{D}} {\varphi %_Z}\left( w \right)} \right)} dw} \right]
\nonumber\\& =  4{r_r}D\int_0^\infty  {\frac{{\sin w{T_b}}}{w}{\mathop{\rm Re}\nolimits} \left[ {\sqrt {\frac{{jw}}{D}} {\varphi _Z}\left( w \right)} \right]} dw
\nonumber\\& \hspace{0.4cm} + 4{r_r}D\int_0^\infty  {\frac{{\cos w{T_b} - 1}}{w}{\mathop{\rm Im}\nolimits} \left[ {\sqrt {\frac{{jw}}{D}} {\varphi _Z}\left( w \right)} \right]} dw
\nonumber\\&  = 4{r_r}D\int_0^\infty  {\frac{{\sin z}}{z}{\mathop{\rm Re}\nolimits} \left[ {q\left( {\frac{z}{{T_b}}} \right)} \right]} dz  + 4{r_r}D\int_0^\infty  {\frac{{\cos z}}{z}} 
\nonumber\\& \hspace{0.5cm}
{\mathop{\rm Im}\nolimits} \left[ {q\left( {\frac{z}{{T_b}}} \right)} \right]dz  - 4{r_r}D\int_0^\infty  {\frac{1}{w}{\mathop{\rm Im}\nolimits} \left[ {q\left( w \right)} \right]} dw,
\label{proofA1}
\end{align}
where
\begin{align}
q\left( w \right) = &\frac{{\left( {\frac{1}{{{r_r}}} + \frac{{{k_1}jw}}{{D\left( {jw + {k_{-1}}} \right)}}} \right)}}{{\left( {\frac{1}{{{r_r}}} + \frac{{{k_1}jw}}{{D\left( {jw + {k_{-1}}} \right)}} + \sqrt {\frac{{jw}}{D}} } \right)}}\frac{1}{{4\pi {r_0}D}}
\nonumber\\& \times \exp \left\{ { - \left( {{r_0} - {r_r}} \right)\sqrt {\frac{{jw}}{D}} } \right\}.
\end{align} 

%$q\left( w \right) = \sqrt {\frac{{jw}}{D}} {\varphi _Z}\left( w \right)$.

As ${T_b} \to \infty $, we have the following:
\begin{align}
&\mathbb{E} \left[ {N_{\rm{A\&D}}}\left( {\left. {{\Omega _{{r_r}}},{T_b} \to \infty} \right|{r_0}} \right) \right] =  4{r_r}D{N_{\rm{tx}}}\Big[ \int_0^\infty  {\frac{{\sin z}}{z}}{\mathop{\rm Re}\nolimits} \left[ {q\left( 0 \right)} \right] \Big.
 \nonumber\\&  \hspace{0.5cm} \Big. dz + \int_0^\infty  {\frac{{\cos z}}{z}{\mathop{\rm Im}\nolimits} \left[ {q\left( 0 \right)} \right]} dz
  - \int_0^\infty  {\frac{1}{w}{\mathop{\rm Im}\nolimits} \left[ {q\left( w \right)} \right]} dw \Big]
  \nonumber\\& 
  \mathop  = \limits^{(b)} 4{r_r}D{N_{\rm{tx}}}\Big[ \int_0^\infty  {\frac{{\sin z}}{z}{\mathop{\rm Re}\nolimits} \left[ {q\left( 0 \right)} \right]} dz - \int_0^\infty  {\frac{1}{w}{\mathop{\rm Im}\nolimits} \left[ {q\left( w \right)} \right]} dw\Big]
   \nonumber\\& 
   \mathop  = \limits^{(c)} {N_{\rm{tx}}}\Big[\frac{{{r_r}}}{{\pi {r_0}}}\int_0^\infty  {\frac{{\sin z}}{z}} dz - 4{r_r}D\int_0^\infty  {\frac{1}{w}{\mathop{\rm Im}\nolimits} \left[ {q\left( w \right)} \right]} dw\Big]
  \nonumber\\&  
   = \frac{{N_{\rm{tx}}}{{r_r}}}{{2 {r_0}}}
  - 4{N_{\rm{tx}}}{r_r}D\int_0^\infty  {\frac{1}{w}{\mathop{\rm Im}\nolimits} \left[ {\sqrt {\frac{{jw}}{D}} {\varphi _Z}\left( w \right)} \right]} dw,
\label{proofA2}
\end{align}
where $(b)$ is due to the fact that ${\mathop{\rm Im}\nolimits} \left[ {q\left( 0 \right)} \right]=0$,  and $(c)$ is due to $q\left( 0 \right) = \frac{1}{{4\pi {r_0}D}}$.
\end{proof}

\subsubsection{Partial Adsorption Receiver}
The partial adsorption receiver  only adsorbs some of the molecules that collide with its surface, corresponding to $k_1$ as a finite constant and $k_{-1} =0$ in \eqref{boundary1}. 

\begin{ppro}
 The  number of molecules expected to be adsorbed to the partial adsorption receiver by time ${T_b}$, as ${T_b} \to \infty $,  is derived as
\begin{align}
\mathbb{E} &\left[ {N_{\mathrm{PA}}}\left( {\left. {{\Omega _{{r_r}}},{T_b} \to \infty} \right|{r_0}} \right) \right] = \frac{{{N_{\mathrm{tx}}}{k_1}r_r^2}}{{{r_0}\left( {{k_1}{r_r} + D} \right)}}.
\label{der28}
\end{align}
\end{ppro}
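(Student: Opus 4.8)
The plan is to identify the $T_b\to\infty$ limit with the $s\to 0$ value of the Laplace transform of the flux rate, which replaces the oscillatory frequency-domain integral of \textbf{Lemma 1} by an elementary boundary-value problem. Setting $k_{-1}=0$ throughout, I would first note that the cumulative adsorbed fraction is $R_{\mathrm{PA}}(\Omega_{r_r},T|r_0)=\int_0^T K(t|r_0)\,dt$, so by the Laplace final-value theorem its limit as $T_b\to\infty$ equals $\lim_{s\to0}\tilde K(s)$, where $\tilde K(s)=\mathcal{L}\{K(t|r_0)\}$ and, from \eqref{der21}, $\tilde K(s)=4\pi r_r^2 D\,\partial_r\tilde C(r,s|r_0)\big|_{r=r_r}$. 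Then $\lim_{T_b\to\infty}\mathbb{E}[N_{\mathrm{PA}}]=N_{\mathrm{tx}}\,\tilde K(0)$. This limit exists because, with no desorption, each molecule is eventually adsorbed or escapes to infinity, so $R_{\mathrm{PA}}$ is monotone and bounded above by one.

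Next I would solve the steady-state ($s\to0$) problem directly rather than evaluate the integral in \textbf{Lemma 1}. Taking the Laplace transform of \eqref{ficklaw} and letting $s\to0$ yields, for the time-integrated field $\tilde C_0(r):=\tilde C(r,0|r_0)$, the ODE $D\,\frac{d^2}{dr^2}\big(r\tilde C_0(r)\big)=-\frac{1}{4\pi r_0}\delta(r-r_0)$, subject to the decay condition $\lim_{r\to\infty}C=0$ and to the radiation boundary condition \eqref{boundary1} with $k_{-1}=0$, i.e.\ $D\,\partial_r\tilde C_0\big|_{r=r_r}=k_1\tilde C_0(r_r)$. The solution makes $g(r):=r\tilde C_0(r)$ piecewise linear: a constant for $r>r_0$ (forced by decay) and affine on $(r_r,r_0)$, with the delta source in \eqref{initial1} imposing the slope jump $g'(r_0^-)=\frac{1}{4\pi r_0 D}$. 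These three conditions (decay, continuity/jump at $r_0$, and the radiation condition at $r_r$) fix the two constants of the inner solution.

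Finally I would evaluate the flux. Writing the inner solution as $g(r)=a_1+b_1 r$ with $b_1=\frac{1}{4\pi r_0 D}$, the radiation condition gives $a_1=-\frac{k_1 r_r^2}{D+k_1 r_r}\,\frac{1}{4\pi r_0 D}$, and since $\partial_r\tilde C_0\big|_{r_r}=-a_1/r_r^2$, one obtains $\tilde K(0)=4\pi r_r^2 D\,(-a_1/r_r^2)=-4\pi D a_1=\frac{k_1 r_r^2}{r_0(D+k_1 r_r)}$. Multiplying by $N_{\mathrm{tx}}$ yields \eqref{der28}. As a consistency check, letting $k_1\to\infty$ recovers the fully-adsorbing eventual-capture fraction $r_r/r_0$, matching the classical Smoluchowski result, and the value is always below $r_r/r_0<1$ as required of a probability.

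I expect the main obstacle to be justifying the interchange of limits, namely that the $T_b\to\infty$ limit of $R_{\mathrm{PA}}$ really coincides with the $s\to0$ Laplace value (the hypotheses of the final-value theorem) and that the point-source jump condition is correctly transferred into the $s=0$ ODE. An alternative that stays closer to the paper's machinery is to set $k_{-1}=0$ directly in $\varphi_Z(w)$ (equivalently in $q(w)$) inside \textbf{Lemma 1} and evaluate $\int_0^\infty \frac{1}{w}\,\mathrm{Im}\!\left[q(w)\right]dw$; under the substitution $w=Dx^2$ the integrand becomes a product of $e^{-(r_0-r_r)x/\sqrt2}$ and trigonometric factors, and the genuinely hard step is the closed-form evaluation of that oscillatory integral, which the steady-state argument sidesteps entirely.
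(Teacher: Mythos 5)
Your proposal is correct, and it reaches \eqref{der28} by a genuinely different route than the paper. The paper's proof is essentially a citation-plus-limit argument: it quotes the known closed-form time-domain expression for the cumulative adsorbed fraction at a partially adsorbing receiver, namely \eqref{der29} from Schulten's lecture notes (with $\alpha = k_1/D + 1/r_r$), observes that as $T_b \to \infty$ the $\erf$ term vanishes and the $\exp\{\cdot\}\erfc\{\cdot\}$ term decays to zero, and is left with $(r_r\alpha - 1)/(r_0\alpha) = k_1 r_r^2/\bigl(r_0(k_1 r_r + D)\bigr)$, which multiplied by $N_{\mathrm{tx}}$ gives \eqref{der28}. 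You instead bypass any time-domain special functions (and equally the oscillatory integral of \textbf{Lemma 1}) by identifying the $T_b \to \infty$ limit with the $s=0$ value of the Laplace-transformed flux $\tilde K(s)$ from \eqref{der21}, and then solving the elementary steady-state boundary-value problem for the time-integrated field: $D\,\frac{d^2}{dr^2}\bigl(r\tilde C_0\bigr) = -\tfrac{1}{4\pi r_0}\delta(r-r_0)$ with the decay condition and the $k_{-1}=0$ radiation condition from \eqref{boundary1}, whose solution is piecewise linear in $r\tilde C_0$; your constants $b_1 = \tfrac{1}{4\pi r_0 D}$ and $a_1 = -\tfrac{k_1 r_r^2}{D+k_1 r_r}\,\tfrac{1}{4\pi r_0 D}$ and the resulting flux $\tilde K(0)$ all check out, as does the $k_1\to\infty$ sanity check against \eqref{der26}. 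What each approach buys: the paper's is a one-line derivation given the external closed form and additionally yields the full transient \eqref{der29}, but it is not self-contained; yours is self-contained, elementary, and structurally transparent (the eventual capture fraction emerges as a harmonic-profile calculation), and it generalizes to other boundary chemistries where no closed-form transient is available. The interchange-of-limits concern you flag is not a real obstacle here: with $k_{-1}=0$ the flux $K(t|r_0)$ is nonnegative, so $R_{\mathrm{PA}}$ is monotone and bounded, monotone convergence gives $\lim_{T_b\to\infty} R_{\mathrm{PA}} = \int_0^\infty K\,dt = \lim_{s\to 0}\tilde K(s)$, and the $s=0$ ODE for $\tilde C_0$ can be obtained directly by integrating \eqref{ficklaw} in time from $0$ to $\infty$ (using $C\to 0$ as $t\to\infty$) rather than by a limit inside the Laplace-domain equation, which removes the delicate step entirely.
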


\begin{proof} 
We note that the exact expression for the  net number of adsorbed molecules expected at the partial adsorption receiver during [${T}$, ${T}$+$T_{s} $]  can be  derived from \cite[Eq.~(3.114)]{schulten2000lectures} as
\begin{align}
\mathbb{E}&\left[ {{N_{\rm{PA}}}\left( {\left. {{\Omega _{{r_r}}},{T},{T} + T_{s}} \right|{r_0}} \right)} \right] = N_{\rm{tx}} \frac{{{r_r}\alpha  - 1}}{{{r_0}\alpha }}
\nonumber\\& \times \Bigg[ {\erf\left\{ {\frac{{{r_r} - {r_0}}}{{\sqrt {4D\left( {{T} + {T_s}} \right)} }}} \right\}} \Bigg. - \exp \left\{ {\left( {{r_0} - {r_r}} \right)\alpha } \right.
\nonumber\\&\left. { + D\left( {{T} + {T_s}} \right){\alpha ^2}} \right\}\erfc\left\{ {\frac{{{r_0} - {r_r} + 2D\alpha \left( {{T} + {T_s}} \right)}}{{\sqrt {4D\left( {{T} + {T_s}} \right)} }}} \right\} 
\nonumber\\& - \erf\left\{ {\frac{{{r_r} - {r_0}}}{{\sqrt {4D{T}} }}} \right\} + \exp \left\{ {\left( {{r_0} - {r_r}} \right)\alpha  + D{T}{\alpha ^2}} \right\}
\nonumber\\& \Bigg. \times {\erfc\left\{ {\frac{{{r_0} - {r_r} + 2D\alpha {T}}}{{\sqrt {4D{T}} }}} \right\}} \Bigg],
\label{der24pab}
\end{align}
where $\alpha  = \frac{{{k_1}}}{D} + \frac{1}{{{r_r}}}$.
 
The cumulative fraction of molecules adsorbed at the partial adsorption receiver by time ${T_b}$ was derived in \cite[Eq.~(3.114)]{schulten2000lectures}  as
\begin{align}
&{R_{\rm{PA}}}\left( {\left. {{\Omega _r},{T_b}} \right|{r_0}} \right) =\frac{{{r_r}\alpha  - 1}}{{{r_0}\alpha }}\left( {1 + \erf\left\{ {\frac{{{r_r} - {r_0}}}{{\sqrt {4D{T_b}} }}} \right\}} \right.
\nonumber\\& \hspace{0.1cm} \left. { - \exp \left\{ {\left( {{r_0} - {r_r}} \right)\alpha  + D{T_b}{\alpha ^2}} \right\}\erfc\left\{ {\frac{{{r_0} - {r_r} + 2D\alpha {T_b}}}{{\sqrt {4D{T_b}} }}} \right\}} \right).
\label{der29}
\end{align}

By setting ${T_b} \to \infty $ and taking the expectation of \eqref{der29}, we arrive at \eqref{der28}.
\end{proof}

The asymptotic result in \eqref{der28} for the  partial adsorption receiver reveals that the  number of adsorbed molecules expected  at infinite time ${T_b}$ increases with increasing  adsorption rate $k_1$, and decreases with increasing  diffusion coefficient $D$  and increasing distance between the transmitter and the center of the receiver $r_0$.

\subsubsection{Full Adsorption Receiver}
In the full adsorption receiver, all molecules adsorb when they collide with its surface, which corresponds to the case of $k_1 \to \infty$ and $k_{-1} = 0$ in \eqref{boundary1}. 

\begin{ppro}
 The  cumulative number of adsorbed molecules expected at the full adsorption receiver by time $T_b$, as ${T_b} \to \infty $,  is derived as
\begin{align}
\mathbb{E} &\left[ {N_{\mathrm{FA}}}\left( {\left. {{\Omega _{{r_r}}},{T_b} \to \infty} \right|{r_0}} \right) \right] = \frac{{N_{{\rm{tx}}}}{{r_r}}}{{{r_0}}}.
\label{der26}
\end{align}
\end{ppro}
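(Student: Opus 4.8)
The plan is to obtain the full-adsorption result as the limit $k_1 \to \infty$ of the partial-adsorption asymptotic value derived in Proposition 2, rather than re-deriving everything from scratch. Since the full adsorption receiver is precisely the special case $k_1 \to \infty$, $k_{-1} = 0$ of the boundary condition \eqref{boundary1}, and Proposition 2 already gives the steady-state cumulative number of adsorbed molecules for finite $k_1$ with $k_{-1}=0$, the cleanest route is to start from \eqref{der28} and push $k_1 \to \infty$.

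First I would write down the Proposition 2 result,
\begin{align}
\mathbb{E}\left[ {N_{\mathrm{PA}}}\left( {\left. {{\Omega _{{r_r}}},{T_b} \to \infty} \right|{r_0}} \right) \right] = \frac{{{N_{\mathrm{tx}}}{k_1}r_r^2}}{{{r_0}\left( {{k_1}{r_r} + D} \right)}},
\nonumber
\end{align}
and then take the limit as $k_1 \to \infty$. Dividing numerator and denominator by $k_1$ gives $\frac{N_{\mathrm{tx}} r_r^2}{r_0(r_r + D/k_1)}$, and sending $k_1 \to \infty$ makes the $D/k_1$ term vanish, leaving $\frac{N_{\mathrm{tx}} r_r^2}{r_0 r_r} = \frac{N_{\mathrm{tx}} r_r}{r_0}$, which is exactly \eqref{der26}. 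This limiting step is elementary and constitutes the core of the argument.

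As an independent check, I would verify the same value directly from the full-adsorption cumulative fraction. Setting $k_{-1}=0$ in \eqref{der19} and letting $k_1 \to \infty$ collapses the prefactor $\bigl(\tfrac{1}{r_r} + \tfrac{k_1 jw}{D(jw+k_{-1})}\bigr)/\bigl(\tfrac{1}{r_r} + \tfrac{k_1 jw}{D(jw+k_{-1})} + \sqrt{jw/D}\bigr)$ toward unity, so the fully-absorbing kernel reduces to the classical spherical absorbing-boundary impulse response of \cite[Eq.~(3.99)]{schulten2000lectures}. Integrating its flux over all time yields the well-known hitting probability $r_r/r_0$ for a diffusing particle started at $r_0$ outside an absorbing sphere of radius $r_r$, so the expected cumulative count is $N_{\mathrm{tx}} r_r/r_0$, confirming \eqref{der26}.

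The main obstacle, if one insists on the direct derivation rather than the limit of Proposition 2, is justifying the interchange of the $k_1 \to \infty$ limit with the time integration of the flux (or equivalently with the inverse-transform integral defining $R_{\mathrm{A\&D}}$). Using the closed-form partial-adsorption expression sidesteps this entirely, since the limit is then taken on an already-evaluated elementary function; for that reason I would present the limit of \eqref{der28} as the primary proof and relegate the absorbing-kernel computation to a remark confirming consistency with the classical result.
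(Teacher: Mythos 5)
Your conclusion is correct, and your ``independent check'' is in fact the paper's own proof: the paper quotes the full-adsorption cumulative fraction
${R_{\mathrm{FA}}}\left( {\left. {{\Omega _r},{T_b}} \right|{r_0}} \right) = \frac{{{r_r}}}{{{r_0}}}\erfc\left\{ {\frac{{{r_0} - {r_r}}}{{\sqrt {4D{T_b}} }}} \right\}$
(Eq.~\eqref{der27}, taken from Schulten's lecture notes and Yilmaz's absorbing-receiver paper), lets $T_b \to \infty$ so the $\erfc$ factor tends to $1$, and multiplies by $N_{\rm{tx}}$ to obtain \eqref{der26}. What you present as your primary argument --- sending $k_1 \to \infty$ in the partial-adsorption asymptotic \eqref{der28} --- is a genuinely different route; the paper only gestures at this style of argument in a remark after its proof (there starting from the A\&D expression \eqref{der25} with $k_1 = \infty$, $k_{-1} = 0$, via integration by parts). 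Your route is attractive because it exhibits the FA result as a continuous limit of the PA family and requires only elementary algebra on an already-derived formula.

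However, your primary argument has a logical wrinkle that you identify and then dismiss for the wrong reason. The quantity in the statement is defined for the receiver whose boundary condition \eqref{boundary1} has $k_1 = \infty$ from the outset; it is $\lim_{T_b \to \infty}$ of the FA response, i.e., the iterated limit $\lim_{T_b \to \infty}\lim_{k_1 \to \infty}$. Your computation evaluates the opposite order, $\lim_{k_1 \to \infty}\lim_{T_b \to \infty}$, applied to the PA response. Equating the two is precisely the interchange-of-limits question, and taking the limit ``on an already-evaluated elementary function'' does not sidestep that question --- it merely computes one of the two iterated limits. The interchange does hold here: one can verify from \eqref{der29} that $R_{\rm{PA}} \to R_{\rm{FA}}$ pointwise in $T_b$ as $k_1 \to \infty$ (the prefactor $\frac{r_r\alpha - 1}{r_0\alpha} \to \frac{r_r}{r_0}$ and the $\exp\{\cdot\}\erfc\{\cdot\}$ term vanishes), and both families are monotone nondecreasing in $T_b$ since nothing desorbs, so the limits commute. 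But as written, your primary proof is not self-contained; the rigor in your proposal actually resides in the ``check,'' which coincides with the paper's proof.
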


\begin{proof}  
We note that the exact expression for the  net number of adsorbed molecules expected at the full adsorption receiver during [$T$, $T$+$T_{s} $] has been derived in \cite{schulten2000lectures,yilmaz2014three844}  as
\begin{align}
\mathbb{E}&\left[ {{N_{{\mathrm{FA}}}}\left( {\left. {{\Omega _{{r_r}}},T,T + T_{s}} \right|{r_0}} \right)} \right] = 
\nonumber\\& {N_{{\rm{tx}}}}\frac{{{r_r}}}{{{r_0}}}\Bigg[ {{\rm{erfc}}\left\{ {\frac{{{r_0} - {r_r}}}{{\sqrt {4D\left( {T + {T_s}} \right)} }}} \right\} - {\rm{ erfc}}\left\{{ \frac{{{r_0} - {r_r}}}{{\sqrt {4DT} }}}\right\} } \Bigg].
\label{der24fab}
\end{align} 

The fraction of molecules adsorbed to the full adsorption receiver by time ${T_b}$ was derived in \cite[Eq.~(3.116)]{schulten2000lectures} and \cite[Eq.~(32)]{yilmaz2014three844} as
\begin{align}
{R_{\mathrm{FA}}}\left( {\left. {{\Omega _r},{T_b}} \right|{r_0}} \right) = \frac{{{r_r}}}{{{r_0}}}\erfc\left\{ {\frac{{{r_0} - {r_r}}}{{\sqrt {4D{T_b}} }}} \right\}.
\label{der27}
\end{align}

By setting $T_b \to \infty $ and taking the expectation of \eqref{der27}, we arrive at \eqref{der26}.
\end{proof}

Alternatively, with the help of integration by parts,
the result in \eqref{der25}  reduces to the asymptotic result in \eqref{der26} for the full adsorption receiver   by setting $k_{1}=\infty$ and $k_{-1}=0$ .

The asymptotic result  for the full adsorption receiver in \eqref{der26} reveals that the  cumulative number of adsorbed molecules  expected by infinite time ${T_b}$ is
\emph{independent} of the diffusion coefficient, and directly proportional to the ratio between the radius of receiver and  the distance between the transmitter and the center of receiver.

\section{Error Probability} 
In this section, we propose that the  net number of adsorbed molecules in a bit interval be used for receiver demodulation. We also derive the error probability of the MC system using the Poisson approximation and the Skellam distribution.

To calculate the error probability at the receiver, we first need to model the statistics of molecule adsorption. 
For a single emission at $t=0$, the net number of molecules adsorbed during $[T,T+T_b]$ is \emph{approximately} modeled as the difference between two binomial distributions as
\begin{align}
N_{{\rm{new}}}^{{\rm{Rx}}}\sim & B\left( {{N_{{\rm{tx}}}},R_{\rm{A\&D}}\left( {\left. {{\Omega _{{{\rm{r}}_{\rm{r}}}}},{{T}} + {{{T}}_b}} \right|{{\rm{r}}_{\rm{0}}}} \right)} \right) - 
\nonumber \\& \hspace{2.5cm} B\left( {{N_{{\rm{tx}}}},R_{\rm{A\&D}}\left( {\left. {{\Omega _{{{\rm{r}}_{\rm{r}}}}},{{T}}} \right|{{\rm{r}}_{\rm{0}}}} \right)} \right), \label{approsx}
\end{align}
where the cumulative fraction of  particles that are adsorbed
to  the A$\&$D  receiver ${R}_{\rm{A\&D}}\left( {\left. {{\Omega _{{r_r}}},{T}} \right|{r_0}} \right)$ is   given in   \eqref{proofA1}.  Note that the number of molecules adsorbed at $T+T_b$ depends on that at $T$, however  this dependence can be ignored for a sufficiently large bit interval, and makes \eqref{approsx}  accurate. 
The  number of adsorbed molecules represented by Binomial distribution can also be approximated  using either the  Poisson distributions or the Normal distributions. 

The net number of adsorbed molecules depends on the emission in the current bit interval and those in previous bit intervals.  Unlike the full adsorption receiver  in \cite{yilmaz2014arrival,Yilmaz2014effect,heren2014effect,yilmaz2014arrival} and partial adsorption receiver where the net number of adsorbed molecules is always positive, the net number of adsorbed molecules of the A$\&$D receiver \emph{can be negative}. Thus, we cannot model the net number of adsorbed molecules of the reversible adsorption receiver during one bit interval as 
$N_{{\rm{new}}}^{{\rm{Rx}}}\sim B\left( {{N_{{\rm{tx}}}},{R}\left( {\left. {{\Omega _{{{\rm{r}}_{\rm{r}}}}},{{T,T}} + {{{T}}_b}} \right|{{\rm{r}}_{\rm{0}}}} \right)} \right)$ with  ${R}\left( {\left. {{\Omega _{{{\rm{r}}_{\rm{r}}}}},{{T,T}} + {{{T}}_b}} \right|{{\rm{r}}_{\rm{0}}}} \right){\rm{ = }}\int_{{T}}^{{{T}} + {{{T}}_b}} {{\rm{K}}\left( {\left. {\rm{t}} \right|{{\rm{r}}_{\rm{0}}}} \right){\rm{dt}}}$, which was  used to   model that of full adsorption receiver and partial adsorption receiver \cite{heren2014effect,yilmaz2014arrival}.
%because the binomial distribution random variable is non-negative.

For multiple emissions, the cumulative number of adsorbed molecules is modeled as the sum of  multiple binomial random variables. This sum does not
 lend itself to a convenient expression. Approximations for the sum were used in \cite{kuran2010energy}.
 % It is known that the binomial distribution can be approximated with the Normal distribution, when the number of emitted molecules is sufficiently large, and the arrival probability ${R}\left( {\left. {{\Omega _{{r_r}}},  {T}} \right|{r_0}} \right)$ is not close to 0 or 1 \cite{box1978statistics}. In our fluid environment, the arrival probability is usually  small and close to 0, so the Normal distribution may not be sufficiently accurate for our model.
%Alternatively,
Here,  the binomial distribution can be approximated with the Poisson distribution, when we have sufficiently large $N_{\rm{tx}}$ and sufficiently small ${R_{\rm{A\&D}}}\left( {\left. {{\Omega _{{r_r}}},  {T}} \right|{r_0}} \right)$ \cite{natrella2010nist}. Thus, we approximate the net number of  adsorbed molecules received in the $j$th bit interval as 
\begin{align}
{N_{\rm{new}}^{\rm{Rx}}\left[ j \right] } \sim & P\left( {\sum\limits_{i = 1}^j {{N_{{\rm{tx}}}}{s_i}R_{\rm{A\&D}}\left( {\left. {{\Omega _{{r_r}}},\left( {j - i + 1} \right){T_b}} \right|{r_0}} \right)} } \right)
\nonumber\\& \hspace{0.1cm} - P\left( {\sum\limits_{i = 1}^{j } {{N_{{\rm{tx}}}}{s_i}R_{\rm{A\&D}}\left( {\left. {{\Omega _{{r_r}}},\left( {j - i} \right){T_b}} \right|{r_0}} \right)} } \right), \label{poisson}
\end{align}
where $s_i$ is the   $i$th transmitted bit. The difference between two Poisson random variables follows the Skellam distribution \cite{karlis2003analysis}. For  threshold-based demodulation, the error probability of the transmit bit-1 signal in the $j$th 
bit is then
\begin{align}
&{P_e}\left[ {{{\hat s}_j} = 0\left| {{s_j} = 1,{s_{1:j - 1}}} \right.} \right]  
\nonumber\\& \hspace{0.1cm} = \Pr \left( {\left. {N_{{\rm{new}}}^{{\rm{Rx}}}\left[ j \right] < {N_{{\rm{th}}}}} \right|{s_j} = 1,{s_{1:j - 1}}} \right)
\nonumber\\&\hspace{0.1cm} {\approx  \sum\limits_{n =  - \infty }^{{N_{{\rm{th}}}-1}} {\exp \left\{ { - \left( {{\Psi _1} + {\Psi
 _2}} \right)} \right\}} {\left( {{{{\Psi
 _1}} \mathord{\left/
 {\vphantom {{{\Psi
 _1}} {{\Psi
 _2}}}} \right.
 \kern-\nulldelimiterspace} {{\Psi
 _2}}}} \right)^{{n \mathord{\left/
 {\vphantom {n 2}} \right.
 \kern-\nulldelimiterspace} 2}}} {I_n}\left( {2\sqrt {{\Psi
 _1}{\Psi
 _2}} } \right),}
 \label{error_bit1}
\end{align}
where 
\begin{align}
{\Psi
 _1} = \sum\limits_{i = 1}^j {{N_{{\rm{tx}}}}{s_i}R_{\rm{A\&D}}\left( {\left. {{\Omega
 _{{r_r}}},\left( {j - i + 1} \right){T_b}} \right|{r_0}} \right)}, \label{omega_1} 
\end{align}
\begin{align}
{\Psi
 _2} = \sum\limits_{i = 1}^{j - 1} {{N_{{\rm{tx}}}}{s_i}R_{\rm{A\&D}}\left( {\left. {{\Omega _{{r_r}}},\left( {j - i} \right){T_b}} \right|{r_0}} \right)}, \label{omega_2} 
\end{align}
${{\hat s}_j}$ is the detected $j$th bit, and ${I_n}\left(  \cdot  \right)$ is the modified Bessel function of the first kind.

 Similarly, the error probability of the transmit bit-0 signal  in the $j$th 
bit is given as 
\begin{align}
&{P_e}\left[ {{{\hat s}_j} = 1\left| {{s_j} = 0,{s_{1:j - 1}}} \right.} \right] 
\nonumber\\& \hspace{0.1cm} = \Pr \left( {\left. {N_{{\rm{new}}}^{{\rm{Rx}}}\left[ j \right] \ge {N_{{\rm{th}}}}} \right|{s_j} = 0,{s_{1:j - 1}}} \right)
\nonumber\\&\hspace{0.1cm}  { \approx \sum\limits_{n = {N_{{\rm{th}}}}}^\infty  {\exp \left\{ { - \left( {{\Psi
 _1} + {\Psi
 _2}} \right)} \right\}} {\left( {{{{\Psi
 _1}} \mathord{\left/
 {\vphantom {{{\Psi
 _1}} {{\Psi
 _2}}}} \right.
 \kern-\nulldelimiterspace} {{\Psi
 _2}}}} \right)^{{n \mathord{\left/
 {\vphantom {n 2}} \right.
 \kern-\nulldelimiterspace} 2}}}{I_n}\left( {2\sqrt {{\Psi
 _1}{\Psi
 _2}} } \right),}
 \label{error_bit0}
\end{align}
where $\Psi
 _1$ and $\Psi
 _1$ are given in \eqref{omega_1} and \eqref{omega_2}, respectively.

{Thus, the  error probability of the random transmit bit in the $j$th interval is  expressed by
\begin{align}
{P_e}\left[ j \right] = &{P_1}{P_e}\left[ {{{\hat s}_j} = 0\left| {{s_j} = 1,{s_{1:j - 1}}} \right.} \right]
\nonumber\\&  + {P_0}{P_e}\left[ {{{\hat s}_j} = 1\left| {{s_j} = 0,{s_{1:j - 1}}} \right.} \right], \label{overallerror}
\end{align}
where $P_1$ and $P_0$ denotes the probability of sending bit-1 and bit-0, respectively.

For  comparison, we also present the error probability of the transmit bit-1 signal  in the $j$th 
bit and error probability of the transmit bit-0 signal  in the $j$th 
bit for the full adsorption receiver and the partial adsorption receiver using the Poisson approximation as  
 \begin{align}
&{P_e}\left[ {{{\hat s}_j} = 0\left| {{s_j} = 1,{s_{1:j - 1}}} \right.} \right]  
{\approx  \exp \left\{ {{N_{{\rm{tx}}}}\Gamma } \right\}\sum\limits_{n = 0}^{{N_{{\rm{th}}}}-1} {\frac{{{{\left[ {{N_{{\rm{tx}}}}\Gamma } \right]}^n}}}{{n!}}} ,}
 \label{error_bitFA1}
\end{align}
and
\begin{align}
&{P_e}\left[ {{{\hat s}_j} = 1\left| {{s_j} = 0,{s_{1:j - 1}}} \right.} \right]  
{\approx 1- \exp \left\{ {{N_{{\rm{tx}}}}\Gamma } \right\}\sum\limits_{n = 0}^{{N_{{\rm{th}}}}-1} {\frac{{{{\left[ {{N_{{\rm{tx}}}}\Gamma } \right]}^n}}}{{n!}}} .}
 \label{error_bitFA0}
\end{align}
In \eqref{error_bitFA1} and \eqref{error_bitFA0}, we have 
 \begin{align}
 \Gamma  = \sum\limits_{i = 1}^j {{s_i}{R_{\rm{FA}}}\left( {\left. {{\Omega _{{{{r}}_{{r}}}}},\left( {{{j - i}}} \right){{{T}}_{{b}}}\left( {{{j - i + 1}}} \right){{{T}}_{{b}}}} \right|{{{r}}_{{0}}}} \right)} 
 \end{align} 
 for the full adsorption receiver, and 
 \begin{align}
 \Gamma  = \sum\limits_{i = 1}^j {{s_i}{R_{\rm{PA}}}\left( {\left. {{\Omega _{{{{r}}_{{r}}}}},\left( {{{j - i}}} \right){{{T}}_{{b}}}\left( {{{j - i + 1}}} \right){{{T}}_{{b}}}} \right|{{{r}}_{{0}}}} \right)} 
 \end{align}  for the partial adsorption receiver.}

\section{Simulation Framework}
This section describes the stochastic simulation framework for the point-to-point MC system with the A$\&$D  receiver described by \eqref{boundary1}, which can be simplified to the MC system with the partial adsorption receiver and full adsorption receiver by setting $k_{-1}=0$ and $k_1=\infty$, respectively. This simulation framework takes into account the signal modulation, molecule free diffusion, molecule \text{A$\&$D} at the surface of the receiver, and signal demodulation.  

To  model the stochastic reaction of molecules in the fluid, two options are a subvolume-based simulation framework  or a particle-based simulation framework.  In a subvolume-based simulation framework, the environment is divided into many subvolumes, where the number of molecules in each subvolume is recorded \cite{chou2013extended}.  In a particle-based simulation framework \cite{steve2004stochastic},  the exact position of each molecule and the  number of molecules in the fluid environment is recorded. 
%Even though the particle-based simulation framework is not as efficient as the subvolume-based simulation framework in terms of computation, it does not need to strictly obey the limitation  that the number of nonreactive collision should be much larger than that of the reactive collision as in the subvolume-based simulation \cite{steve2004stochastic}. In other words,  the subvolume size is one important parameter needs to be carefully set according to the size of individual molecule and the total size of environment. 
To   accurately capture the locations of individual information molecules, we adopt a particle-based simulation framework with a spatial resolution on the order of several nanometers \cite{steve2004stochastic}.

\subsection{Algorithm}
We  present the algorithm for simulating the MC system with an  A$\&$D  receiver in Algorithm 1. 
In the following subsections, we describe the details of Algorithm 1.
\begin{algorithm}
\caption{The Simulation of a MC System with an A$\&$D  Receiver}\label{euclid}
Require: $N_{\rm{tx}}$, $r_0$, $r_r$,  $ {\Omega _{{r_r}}}$, $D$, $\Delta t$, $T_s$, $T_b$, $N_{\rm{th}}$
\begin{algorithmic}[1]
\Procedure{Initialization}{}
\State Generate Random Bit Sequence  $\left\{ {{b_1},{b_2}, \cdots ,{b_j}, \cdots } \right\}$ %$\textit{stringlen} \gets \text{length of }\textit{string}$
\State Determine Simulation End Time
\BState  \textbf{For all} Simulation Time Step \textbf{do} 
\BState  \, \textbf{If} at start of $j$th bit interval and $b_j= ``1"$
%\If {at start of $j$th bit interval and $b_j= ``1"$}  
\State  Add $N_{\text{tx}}$ emitted molecules  
\BState \,  \textbf{For all} free molecules in environment \textbf{do} 
\State  Propagate  free molecules following  $\mathcal{N}\left( {0,2D\Delta t} \right)$
\State  Evaluate  distance $d_m$ of  molecule to receiver  
   \If {$d_m< r_r$}
\State     Update  state $\&$ location of  collided molecule 
\State   Update $\#$ of collided  molecules  $N_C$
\EndIf
\BState \,   \textbf{For all} $N_C$ collided molecules  \textbf{do}
 \If {Adsorption Occurs}
\State  Update $\#$ of \emph{newly}-adsorbed molecules  $N_A$
\State  Calculate adsorbed molecule  location 
\State \, \, $\left( {x_m^A,y_m^A,z_m^A} \right)$
\Else  
\State Reflect the molecule off receiver surface to
\State \, \, $\left( {x_m^{Bo},y_m^{Bo},z_m^{Bo}} \right)$
\EndIf
\BState  \, \textbf{For all} \emph{previously}-adsorbed molecules \textbf{do} 
\If {Desorption Occurs}
\State    Update  state $\&$ location of  desorbed molecule 
\State  Update $\#$  of \emph{newly}-desorbed molecules  $N_D$
\State  Displace \emph{newly}-desorbed molecule to 
\State $\left( {x_m^D,y_m^D,z_m^D} \right)$
\EndIf
\BState \, Calculate \emph{net}  number of  adsorbed molecules, 
\BState \, which is   $N_A-N_D$ 
%\EndIf
\BState Add net number of adsorbed molecules in each simulation interval of $j$th bit interval to determine ${N_{\rm{new}}^{\rm{Rx}}\left[ j \right] }$
\BState Demodulate by comparing ${N_{\rm{new}}^{\rm{Rx}}\left[ j \right] }$ with $N_{\rm{th}}$
\EndProcedure
\end{algorithmic}
\end{algorithm}

\subsection{Modulation, Emission, and Diffusion}
In our model, we consider  BCSK, where two different numbers of molecules represent the binary signals ``1" and ``0". At the start of each bit interval, if the current bit is ``1",  then  $N_{\rm{tx}}$ molecules are emitted from the point transmitter at a distance $r_0$ from the center of the receiver. Otherwise, the point transmitter emits no molecules to transmit bit-0.

 The time is divided into small simulation intervals of size $\Delta t$, and each time instant is $t_m = m\Delta t$, where $m$ is the current simulation index. According to   Brownian motion,  the displacement of a molecule in each dimension in one simulation step $\Delta t$ can be modeled by an independent Gaussian distribution with variance $2D\Delta t$ and zero mean ${\mathcal{N}\left( {0,2D\Delta t} \right)}$ . The displacement  $\Delta S$ of a molecule in a 3D fluid environment in one simulation step $\Delta t$ is therefore 
\begin{align}
 \Delta S = \left\{ {\mathcal{N}\left({0,2D\Delta t} \right),\; \mathcal{N}\left({0,2D\Delta t} \right),\;\mathcal{N}\left({0,2D\Delta t} \right)} \right\}.
 \end{align}
%where  ${\mathcal{N}\left( {0,2D\Delta t} \right)}$ is the normal distribution.

In each simulation step, the number of molecules and their locations are stored.
%Note that the accuracy of the simulation results depends on the size of  simulation step $\Delta t$.
\subsection{Adsorption or Reflection}
 According to the second boundary condition in \eqref{boundary2},  molecules that collide with the receiver surface are either adsorbed or reflected back. The $N_C$ collided  molecules    are identified  by  calculating the distance between each molecule and the center of the receiver. Among the collided molecules, the probability of a molecule being adsorbed to the receiver surface, i.e., the adsorption probability, is a function of the diffusion coefficient, which is given as
 \cite[Eq.~(10)]{erban2007reactive}
\begin{align}
{P_A} = {k_1}\sqrt {\frac{{\pi \Delta t}}{D}} . \label{abpro}
 \end{align}
 
The probability  that a collided molecule bounces off of the receiver is $1-{P_A}$.

%% ADAM - due to symmetry, the placement of a molecule when it adsorbs is not critical

 It is known that  adsorption may occur during the simulation step $\Delta t$, and  determining exactly where a molecule  adsorbed to the surface of the receiver during $\Delta t$ is a non-trivial problem. Unlike \cite{andrews2009accurate}  (which considered a flat adsorbing surface), we assume that the molecule's adsorption site during $[t_{m-1}, 
 t_m]$ is the location where the line, formed by this molecule's location at the start of the current simulation step $\left( {{x_{m - 1}},{y_{m - 1}},{z_{m - 1}}} \right)$ and this molecule's location  at the end of the current simulation step after diffusion $\left( {{x_{m}},{y_{m}},{z_{m}}} \right)$, intersects the surface of the receiver. Assuming that the location of the center of receiver is $(x_r,y_r,z_r)$, then the location of the intersection point between this 3D line segment,
 %with the start point $\left( {{x_{k - 1}},{y_{k - 1}},{z_{k - 1}}} \right)$ and end point  $\left( {{x_{k}},{y_{k}},{z_{k}}} \right)$,  
 and a sphere with center at  $(x_r,y_r,z_r)$ in the $m$th simulation step, can be shown to be
% \begin{align}
%(x_m^A, y_m^A, z_m^A) =\left\{
%\begin{array}{ll}
%{x_{m - 1}} + \frac{{{x_m} - {x_{m - 1}}}}{{\Delta t}}g \\
%{y_{m - 1}} + \frac{{{y_m} - {y_{m - 1}}}}{{\Delta t}}g\\
%{z_{m - 1}} + \frac{{{z_m} - {z_{m - 1}}}}{{\Delta t}}g,
%\end{array}
%\right.
% \end{align}
{
 \begin{align}
x_m^A = & {x_{m - 1}} + \frac{{{x_m} - {x_{m - 1}}}}{{\Lambda }}g,\label{loc1}
\\ y_m^A = & {y_{m - 1}} + \frac{{{y_m} - {y_{m - 1}}}}{{\Lambda  }}g, \label{loc11}
\end{align}
\begin{align}
z_m^A = & {z_{m - 1}} + \frac{{{z_m} - {z_{m - 1}}}}{{\Lambda }}g, \label{loc12}
 \end{align}
where 
\begin{align}
\Lambda = \sqrt {{{\left( {{x_m} - {x_{m - 1}}} \right)}^2} + {{\left( {{y_m} - {y_{m - 1}}} \right)}^2} + {{\left( {{z_m} - {z_{m - 1}}} \right)}^2}} , \label{loc_delta}
\end{align}
\begin{align}
g = \frac{{ - b - \sqrt {{b^2} - 4ac} }}{{2a}}. \label{loc2}
\end{align}}

{
In \eqref{loc2}, we have 
\begin{align}
a = &{\left( {\frac{{{x_m} - {x_{m - 1}}}}{{\Lambda}}} \right)^2} + {\left( {\frac{{{y_m} - {y_{m - 1}}}}{{\Lambda }}} \right)^2} + {\left( {\frac{{{z_m} - {z_{m - 1}}}}{{\Lambda }}} \right)^2},\nonumber\\
b = &2\frac{{\left( {{x_m} - {x_{m - 1}}} \right) {({x_{m - 1}}-x_r)} }}{{\Lambda }}  + 2\frac{{\left( {{y_m} - {y_{m - 1}}} \right)}{{(y_{m - 1}-y_r)} }}{{\Lambda }}
\nonumber\\&    + 2\frac{{\left( {{z_m} - {z_{m - 1}}} \right) {({z_{m - 1}}-z_r) } }}{{\Lambda }},\label{loc31}
\end{align}
%and
\begin{align}
c = & { {({x_{m - 1}}-x_r)}^2} + { {({y_{m - 1}}-y_r)}^2} + {{({z_{m - 1}}-z_r)}^2}-{r_r}^2,
 \label{loc3}
\end{align}
where $\Lambda$ is given in \eqref{loc_delta}.}

 Of course, due to symmetry, the location of the adsorption site does not impact the overall accuracy of the  simulation.

If a molecule fails to adsorb to the receiver, then in the reflection process we make the approximation that the molecule bounces back to its position at the start of the current  simulation step.
 Thus, the location of the molecule after reflection by the receiver in the $m$th simulation step is approximated as
\begin{align}
\left( {x_m^{Bo},y_m^{Bo},z_m^{Bo}} \right) = \left( {{x_{m - 1}},{y_{m - 1}},{z_{m - 1}}} \right). \label{loc4}
\end{align}  

Note that the approximations for molecule locations in the adsorption process and the reflection process can be  accurate for sufficiently small simulation steps (e.g., $\Delta t < 10^{-7}$  s  for the system that we simulate in Section V), but  small simulation steps result in poor computational efficiency.  The tradeoff between the accuracy and the efficiency can be deliberately balanced by the choice of simulation step.

\subsection{Desorption}
In the desorption process, the molecules adsorbed at the receiver boundary  either desorb or remain adsorbed.  The desorption process can be modeled as a  first-order chemical reaction. 
 Thus,  the desorption probability of a molecule at the receiver surface  during  $\Delta t$ is given by  \cite[Eq. (22)]{andrews2009accurate}
\begin{align}
{P_D} = 1 - {e^{ - {k_{ - 1}}\Delta t}}.\label{loc5}
\end{align} 

The displacement of a  molecule after desorption is an important factor for accurate modeling of  molecule behaviour.  If the simulation step were small, then we might place the desorbed molecule near the receiver surface; otherwise, doing so may result in an artificially higher chance of re-adsorption in the following time step, resulting in an inexact  concentration profile. To avoid this, we take into account the diffusion \emph{after} desorption, and place the desorbed molecule away from the surface with displacement $\left( {\Delta x,\Delta y,\Delta z} \right)$ 
\begin{align}
\left( {\Delta x,\Delta y,\Delta z} \right) = \left( {f\left( {{P_1}} \right),f\left( {{P_2}} \right),f\left( {{P_3}} \right)} \right), \label{loc6}
\end{align} 
where  each component was empirically found to be \cite[Eq.~(27)]{andrews2009accurate}
\begin{align}
f\left( P \right) = \sqrt {2D\Delta t} \frac{{0.571825P - 0.552246{P^2}}}{{1 - 1.53908P + 0.546424{P^2}}}. \label{loc7}
\end{align} 

In \eqref{loc6}, $P_1$, $P_2$ and $P_3$ are  uniform random numbers between 0 and 1. { Placing the desorbed molecule at a random distance away  from where the molecule was adsorbed  may not be sufficiently accurate due to the lack of consideration for the coupling effect of  A$\&$D  and the diffusion coefficient in \eqref{loc7}.}  

Unlike \cite{andrews2009accurate}, we have a spherical receiver, such that a molecule
after desorption in our model must be  displaced differently.
We assume that the location of a molecule after desorption $\left( {x_m^D,y_m^D,z_m^D} \right)$, based on its location at the start of the current simulation step and the location of the center of the receiver $(x_r,y_r,z_r)$, can be approximated as
{
\begin{align}
x_m^D = & {x_{m - 1}^A} + {\rm{sgn}}\left( {{x_{m - 1}^A} - {x_r}} \right)\Delta x,
\nonumber\\ y_m^D = & {y_{m - 1}^A} + {\rm{sgn}}\left( {{y_{m - 1}^A} - {y_r}} \right)\Delta y,
\nonumber\\ z_m^D = & {z_{m - 1}^A} + {\rm{sgn}}\left( {{z_{m - 1}^A} - {z_r}} \right) \Delta z.
\label{loc8}
\end{align}}
In \eqref{loc8}, $\Delta x$, $\Delta y$, and $\Delta z$ are given in \eqref{loc6}, and $\rm{sgn}\left(  \cdot  \right)$ is the Sign function.

\subsection{Demodulation}
The receiver is capable of counting the net change in the number of adsorbed molecules in each bit interval. The   net number of adsorbed molecules for an entire bit interval is compared with the threshold $N_{\rm{th}}$ and demodulated as the received signal.

\section{Numerical Results}
In this section, we examine the channel response and the asymptotic channel response due to a single bit transmission. We also examine  the channel response  and the error probability  due to multiple bit transmissions. In all figures of  this section, we use FA, PA, ``Anal.'' and
``Sim." to abbreviate ``Full adsorption receiver'', ``Partial adsorption receiver'', ``Analytical'' and ``Simulation'', respectively.  Also, the units for the adsorption rate  $k_1$ and desorption rate $k_{-1}$ are $\rm{\mu m/s}$ and $\rm{s^{-1}}$ in all figures, respectively. 
%Unless otherwise noted, the following parameter values are used throughout this section:  $k_{-1} = 5$ $\rm{s^{-1}} $,  $r_r = 10$ $\rm{\mu m}$, $r_0 = 11 \; \rm{\mu m} $, $D = 8$ $\rm{\mu m^2/s} $,   $T_s =0.002$ s, the simulation step $\Delta t =10^{-4}$ s, and the length of one bit interval $T_b =0.2$
{{In Figs.~\ref{fig:1} to \ref{fig:4}, we set the parameters according to  micro-scale cell-to-cell communication\footnote{{The small separation distance between the transmitter and receiver compared to the receiver radius follows from the example of the  pancreatic islets, where the average  cell  size is around 15 micrometers and the communication range is around 1$- $15 micrometers  \cite{yilmaz2014three844,yilmaz2014simulation}.}},\footnote{This diffusion coefficient value corresponds to that of a large molecule, however, our  analytical results and simulation algorithm apply to any specific value.}: 
 $N_{\rm{tx}} = 1000$, $r_r = 10$ $\rm{\mu m}$, $r_0 = 11 \; \rm{\mu m} $, $D = 8$ $\rm{\mu m^2/s}$,  and the sampling interval $T_s =0.002$ s.}}

\subsection{Channel Response   }

 \begin{figure}[t!]
    \begin{center}
        \includegraphics[width=3.0 in,height=2.5in]{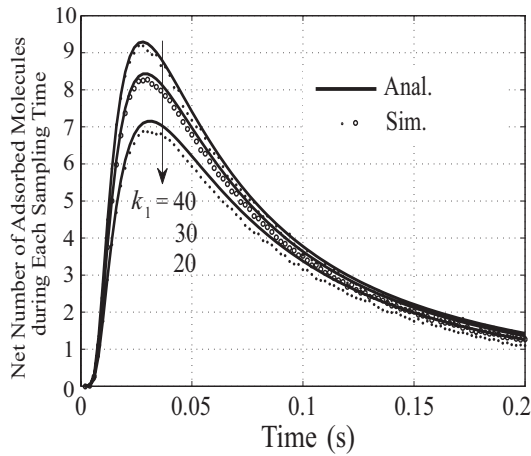}
        \caption{The  net number of adsorbed  molecules for various adsorption rates with 
        $k_{-1} = 5$ $\rm{s^{-1}} $ and   the simulation step $\Delta t =10^{-5}$ s.}
        \label{fig:1}
    \end{center}
\end{figure}

Figs.~\ref{fig:1} and \ref{fig:2} plot the  net change of  adsorbed molecules   at the surface of the A$\&$D  receiver during each sampling time $T_s$ due to a single bit transmission.
The expected analytical curves are plotted using the  exact result in \eqref{der241}. The simulation points are plotted by measuring the net change of  adsorbed molecules  during $ [t,t + T_s]$   using  Algorithm 1 described in  Section IV, where $t = nT_s$, and $n \in \{1,2,3,\ldots \} $.   In both figures, we  average  the net number of adsorbed molecules expected over 10000 independent emissions of $N_{\rm{tx}}=1000$ information molecules at time $t=0$.  We see that the expected net number of adsorbed molecules measured using simulation is close to the exact analytical curves. The small gap between the curves results from the local approximations in the adsorption, reflection, and desorption processes in \eqref{abpro}, \eqref{loc4}, and \eqref{loc8}, which can be reduced by setting a smaller simulation step.

Fig.~\ref{fig:1} examines the impact of the adsorption rate on the net number of adsorbed  molecules expected at the surface of the receiver.  We fix the desorption rate to be $k_{-1} = 5\,\rm{s^{-1}}$. The expected net number of adsorbed  molecules increases with increasing  adsorption rate $k_1$,  as predicted by \eqref{boundary1}.  
 % the increased fraction of molecules will cross the receiver as $k_1$ %increases.
 \mbox{Fig.~\ref{fig:2}}  shows the impact of the desorption rate on the expected net number of adsorbed molecules at the surface of the receiver. We set $k_1 = 20\,\mu \rm{m /s}$. The net number of adsorbed  molecules expected  decreases with increasing  desorption rate $k_{-1}$, which is as predicted by \eqref{boundary1}.
 % that the increased fraction of adsorbed molecules will desorb away with  increasing $k_{-1}$. 
{From a communication perspective,  
Fig.~\ref{fig:1}   shows that 
 a   higher adsorption rate makes the bit-1 signal more distinguishable, whereas Fig.~\ref{fig:2} shows that a lower desorption rate makes the bit-1 signal more distinguishable for the decoding process.}
In Figs.~\ref{fig:1} and \ref{fig:2},  the shorter tail due to the lower adsorption rate and the higher desorption rate corresponds to less intersymbol interference.  
 \begin{figure}[t!]
    \begin{center}
        \includegraphics[width=3.0 in,height=2.5in]{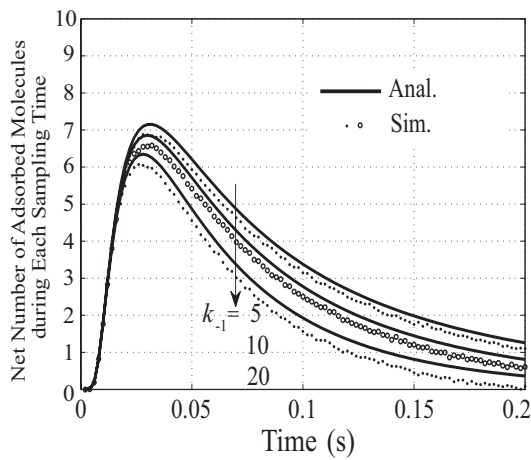}
        \caption{The  net number of adsorbed  molecules for various desorption rates with
        $k_1 = 20$ $\rm{\mu m /s}$ and   the simulation step $\Delta t =10^{-5}$ s. }
        \label{fig:2}
    \end{center}
\end{figure}

%Fig.~\ref{fig:1} examines the impact of the adsorption rate on the net number of adsorbed  molecules expected at the surface of the receiver in one bit interval $T_b$.  We fix the desorption rate to be $k_{-1} = 5\;$$\rm{s^{-1}}$. The net number of adsorbed  molecules  expected increases with increasing  adsorption rate $k_1$,  as predicted by \eqref{boundary1}.
% % the increased fraction of molecules will cross the receiver as $k_1$ increases. 
% Fig.~\ref{fig:2} shows the impact of the desorption rate on the  net number of adsorbed molecules expected at the surface of the receiver. We assume $k_1 = 20 \mu \rm{m /s}$. The net number of adsorbed  molecules expected  decreases with increasing  desorption rate $k_{-1}$, which is as predicted by \eqref{boundary1}.
% % that the increased fraction of adsorbed molecules will desorb away with  increasing $k_{-1}$. 

 \begin{figure}[t!]
    \begin{center}
        \includegraphics[width=3.0 in,height=2.5in]{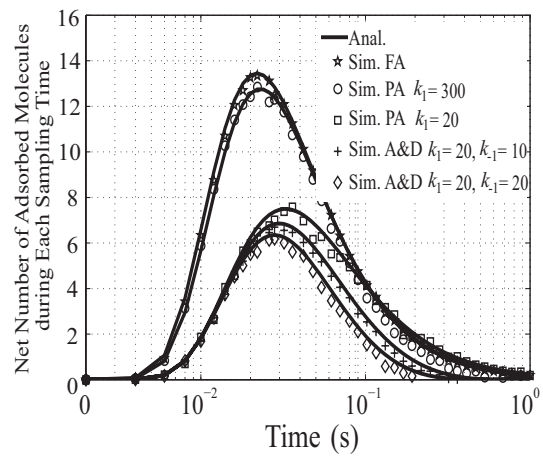}
        \caption{The net number of adsorbed  molecules  with the simulation step $\Delta t =10^{-5}$ s.}
        \label{fig:3}
    \end{center}
\end{figure}

Fig.~\ref{fig:3} plots the net number of adsorbed molecules from 1 bit transmission over a longer time scale. We compare the A$\&$D  receiver with  other receiver designs in order to compare their intersymbol interference (ISI). The analytical curves for the A$\&$D  receiver, the partial adsorption receiver, and the full adsorption receiver are plotted using the  expressions in \eqref{der241}, \eqref{der24pab}, and \eqref{der24fab}, respectively.  %By doing so, we  compare the A$\&$D  receiver with the other receiver designs in order to showcase their ISI effect.
The markers are plotted by measuring the net number of adsorbed molecules during $[t,t+T_s]$ for one bit interval using Algorithm 1 described in  Section IV. We see a close match between the analytical curves and the simulation curves, which confirms the correctness of our derived results. 

{It is clear from Fig.~\ref{fig:3} that the full adsorption receiver and the partial adsorption receiver with high adsorption rate have  longer ``tails". %,which reveals the potential for higher ISI observed in future symbols. 
%This higher ISI effect can  also be characterized by calculating and comparing  the signal to interference ratio ${{\rm{SIR}}}= \frac{\mathbb{E} \left[ {N}\left( {\left. {{\Omega _{{r_r}}},T_b } \right|{r_0}} \right) \right]}{\mathbb{E} \left[ {N}\left( {\left. {{\Omega _{{r_r}}}, T_b,2T_b } \right|{r_0}} \right) \right]}$,  which considers one bit interval of ISI.  } 
Interestingly, the A$\&$D  receiver in our model has the shorter tail, even though it has the same adsorption rate $k_1$ as one of the partial adsorption receivers. This might be surprising since the A$\&$D  receiver would have more total adsorption events than the partial adsorption receiver with the same $k_1$. 
The reason for this difference is that the desorption behaviour at the surface of the receiver results in more adsorption events, but not more  \emph{net} adsorbed molecules; molecules that desorb are not counted unless they adsorb again.

As expected, we see the highest peak $\mathbb{E}\left[ {{N}\left( {\left. {{\Omega _{{r_r}}},T,T + T_{s}} \right|{r_0}} \right)} \right]$ in Fig.~\ref{fig:3} for the full adsorption receiver, which is because all molecules colliding with the surface of the receiver are adsorbed.
 For the partial adsorption receiver, the peak value of $\mathbb{E}\left[ {{N}\left( {\left. {{\Omega _{{r_r}}},T,T + T_{s}} \right|{r_0}} \right)} \right]$ increases with increasing  adsorption rate $k_1$ as shown in \eqref{boundary1}.  The net number of   adsorbed molecules expected  at the  partial adsorption receiver is higher than that at the  A$\&$D  receiver with the same $k_1$. This means  the full adsorption receiver and the partial adsorption receiver have more distinguishable received signals between  bit-1  and bit-0, compared with the A$\&$D  receiver.

\subsection{Equilibrium Concentration }
 \textcolor{red}{
\begin{figure}[t!]
    \begin{center}
        \includegraphics[width=3.0 in,height=2.5in]{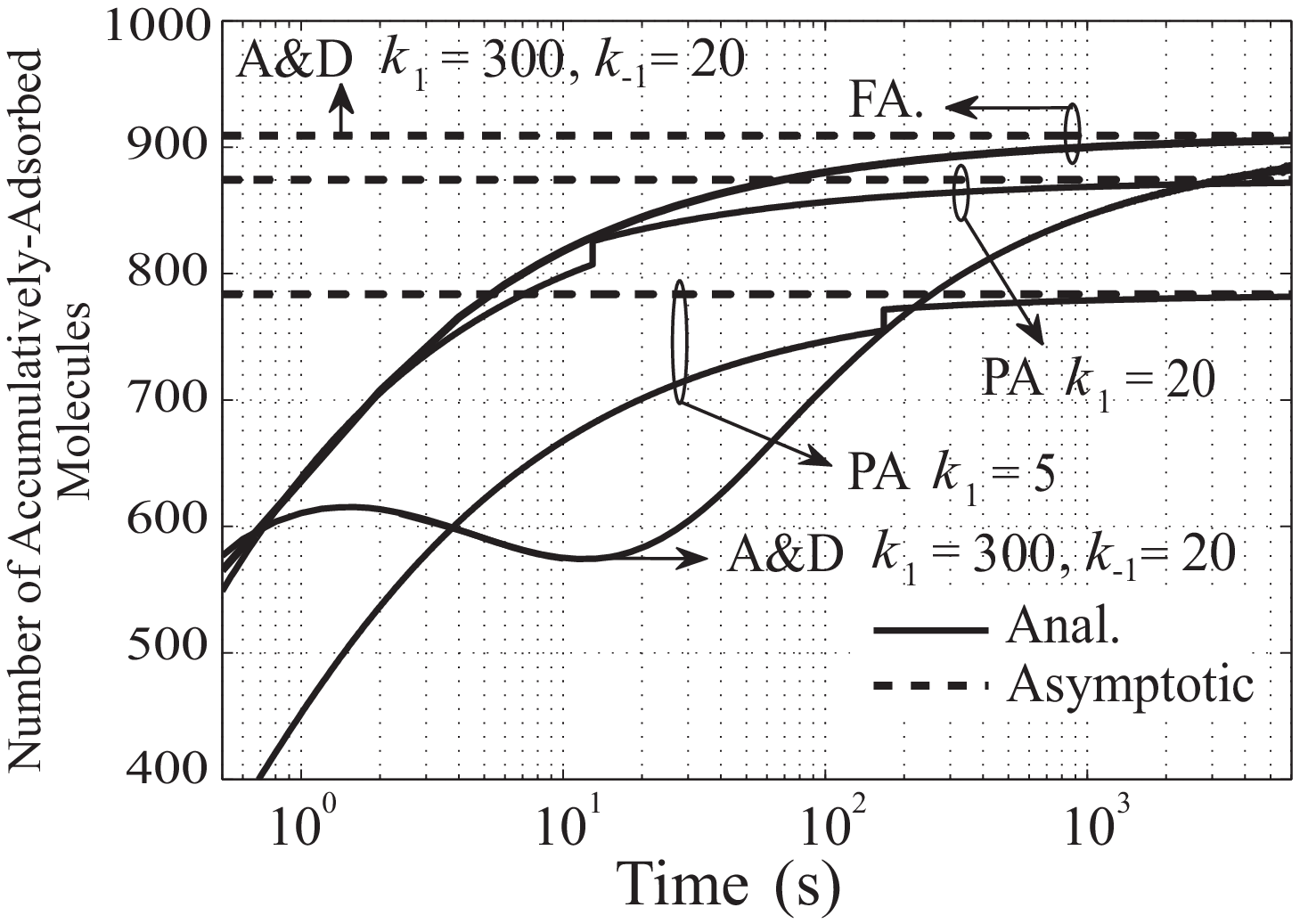}
        \caption{The cumulative number of  adsorbed molecules. }
        \label{fig:4}
    \end{center}
\end{figure} }

Fig.~\ref{fig:4} plots the  number of \emph{cumulatively}-adsorbed   molecules  expected at the surface of the different types of receiver with a single emission $N_{\rm{tx}}$ and  as $T_b \to \infty $. The solid curves are plotted by accumulating the  net number of adsorbed molecules expected in each sampling time $\mathbb{E}\left[ {{N}\left( {\left. {{\Omega _{{r_r}}},T, T+T_s } \right|{r_0}} \right)} \right]$ in \eqref{der24}, \eqref{der24pab}, and \eqref{der24fab}. The dashed lines are plotted  using the derived asymptotic expressions in \eqref{der25}, \eqref{der28}, and \eqref{der26}. The asymptotic analytical lines  are
in precise agreement with the exact analytical curves as $T_b \to \infty$. The exact analytical curves of the full adsorption receiver and the partial adsorption receiver converge to their own asymptotic analytical lines faster than the convergence of the A$\&$D  receiver. Interestingly, we find that the analytical curve of the A$\&$D   receiver decreases after increasing over a few bit intervals, and then increases again, while that of the partial adsorption receiver has an increasing trend as time goes large and shows a sudden jump at a specific time. {The discontinuities  in the PA curves are caused by the underflow during the evaluation of $\text{erfc}(x)$, which results from the limitation of  Matlab's smallest possible double.}
 As expected, the asymptotic curve of the partial adsorption receiver degrades with decreasing $k_1$, as shown in \eqref{der28}.
More importantly, the full adsorption  receiver has a higher initial accumulation rate but the \emph{same} asymptotic number of bound molecules as that of the A$\&$D  receiver with $k_1 = 300$ $\mu \rm{m /s}$ and $k_{-1}=20$ $\rm{s^{-1}} $.
%with various $k_1$ and $k_{-1}$

%\subsection{The number of  adsorbed molecules due to multiple bit transmissions}
\subsection{Demodulation Criterion}

\begin{figure}[t!]
    \begin{center}
        \includegraphics[width=3.0 in,height=2.5in]{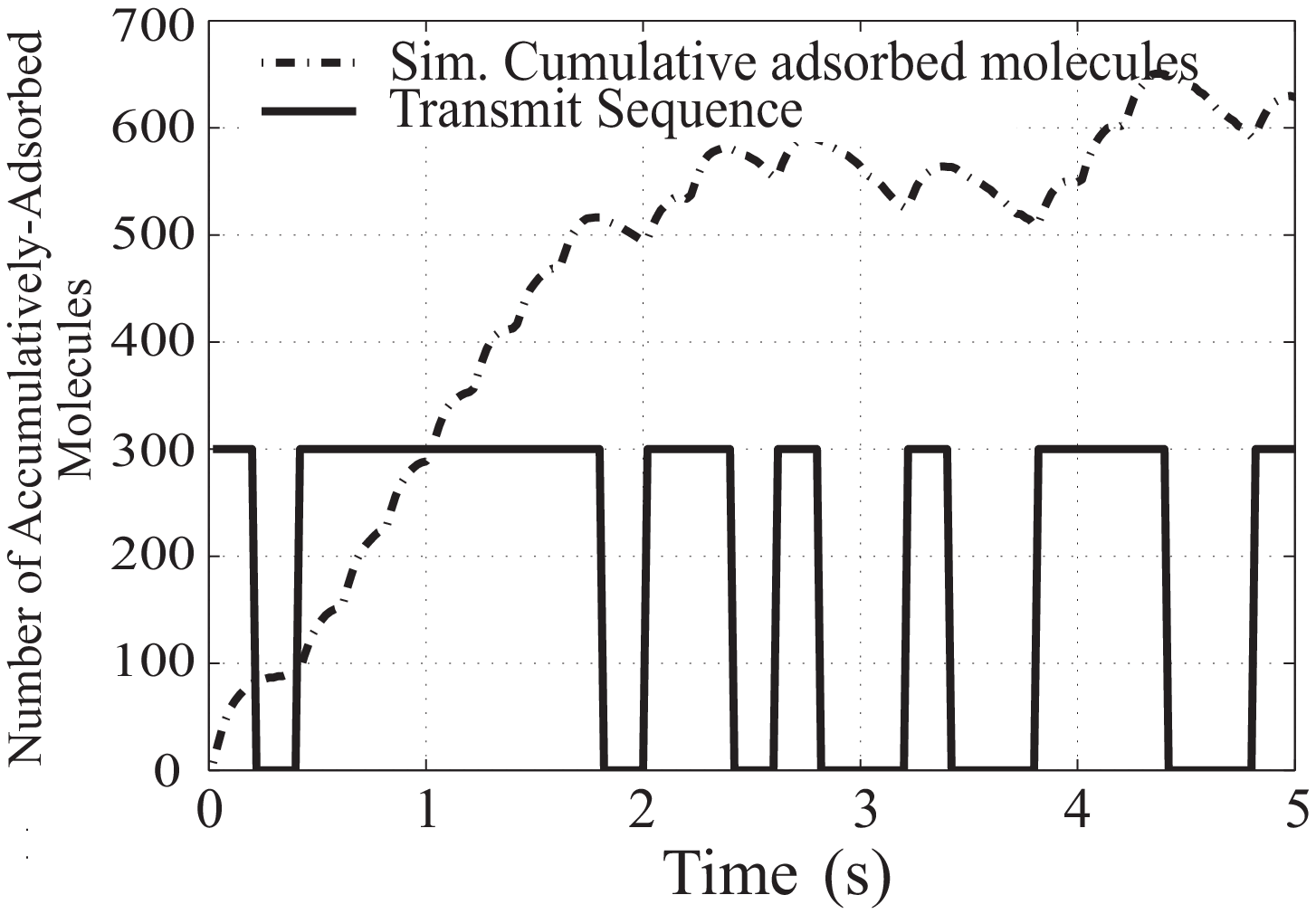}
        \caption{The cumulative number of  adsorbed molecules. }
        \label{fig:61}
    \end{center}
\end{figure} 
In Figs.~\ref{fig:61} and \ref{fig:62}, we compare our proposed demodulation criterion using the \emph{net} number of adsorbed molecules  with the widely used demodulation criterion using the  number of \emph{cumulatively}-adsorbed  molecules in \cite{noel2014improving,noel2014optimal}. In these two figures, we set the parameters:  $k_1=10$ $\rm{\mu m /s}$, $k_{-1}=5$ $\rm{s^{-1}} $,
         $N_{\rm{tx}} = 300$, $r_r = 10$ $\rm{\mu m}$, $r_0 = 11$ $\rm{\mu m }$, $D = 8$ $\rm{\mu m^2/s} $, $\Delta t =10^{-5}$ $s$,  $T_s =0.02$ s,  the bit interval $T_b =0.2$ s, and the number of bits $N_b =25$.
%, and showcase the benefits of our proposed demodulation criterion
Fig.~\ref{fig:61}  plots the  number of cumulatively-adsorbed  molecules expected at the surface of the A$\&$D  receiver in each sampling time due to the  transmission of multiple bits, whereas Fig.~\ref{fig:62}  plots the net number of adsorbed molecules expected at the surface of the A$\&$D  receiver at each sampling time due to the  transmission of multiple bits. In both figures, the solid lines plot the transmit sequence, where each bit can be bit-0 or bit-1. Note that in both figures,  the y-axis values of the transmit signal for bit-0 are zero, and those for bit-1  are scaled in order to clearly show the relationship between the transmit sequence and the number of adsorbed molecules.   
The dashed lines are plotted by  averaging  the  number of  adsorbed molecules over 1000 independent emissions  for the same generated transmit  sequence in the simulation.
\begin{figure}[t!]
    \begin{center}
        \includegraphics[width=3.0 in,height=2.5in]{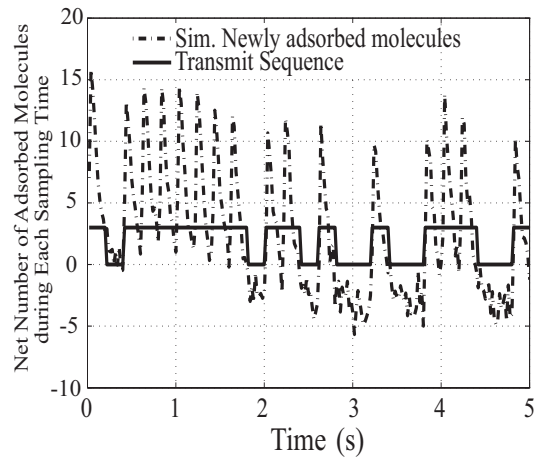}
        \caption{The net number of adsorbed molecules. }
        \label{fig:62}
    \end{center}
\end{figure} 

In Fig.~\ref{fig:61}, it is shown that the  number of \emph{cumulatively}-adsorbed molecules expected at the surface of the A$\&$D  receiver increases in  bit-1 bit intervals, but can decrease in bit-0 bit intervals.  This is because  the new information molecules injected  into the environment due to bit-1 increases  the number of cumulatively-adsorbed molecules, whereas, without new molecules due to bit-0, the desorption reaction can eventually decrease the cumulative number of adsorbed molecules. 
In Fig.~\ref{fig:62}, we observe a single peak net number of adsorbed molecules for each bit-1 transmitted, similar to the channel response for a single bit-1 transmission in Fig.~\ref{fig:1}.  We also see a noisier signal in each bit-0 interval due to the ISI effect brought by the previous transmit signals.

%with parameters: $k_1=10$, $k_{-1}=5$
%         $N_{\rm{tx}} = 300$, $r_r = 10$ $\rm{\mu m}$, $r_0 = 11$ $\rm{\mu m }$, $D = 8$ $\rm{\mu m^2/s} $, $\Delta t =10^{-5}$ $s$,  $T_s %=0.02$ s,  the bit interval $T_b =0.2$ s, and the number of bits $N_b =25$.

To motivate our proposed demodulation criterion, we compare the behaviours of the accumulatively and net change of adsorbed molecules at the receiver in Fig.~\ref{fig:61} and Fig.~\ref{fig:62}. We see that the  number of cumulatively-adsorbed molecules increases with increasing time, whereas
the met number of  adsorbed molecules have comparable value (between 10 and 15) for all bit-1 signals. As such, the threshold for demodulating the  number of cumulatively-adsorbed molecules should be increased as time increases, while the same threshold can be used to demodulate the net number of adsorbed molecules in different bit intervals. We claim that   the received signal should be demodulated using  the net number of  adsorbed molecules. Note that the net number of adsorbed molecules refers to the \emph{net} change, since the receiver cannot  distinguish between the molecules that just adsorbed and those that were already adsorbed.

%with parameters:  $k_1=10$, $k_{-1}=5$
        % $N_{\rm{tx}} = 300$, $r_r = 10$ $\rm{\mu m}$, $r_0 = 11 \; \rm{\mu m} $, $D = 8$ $\rm{\mu m^2/s}$, $\Delta t =10^{-5}$ s,  $T_s =0.02$ s, the bit interval $T_b =0.2$ s, and the number of bits $N_b =25$

\begin{figure}[t!]
    \begin{center}
        \includegraphics[width=3.0 in,height=2.5in]{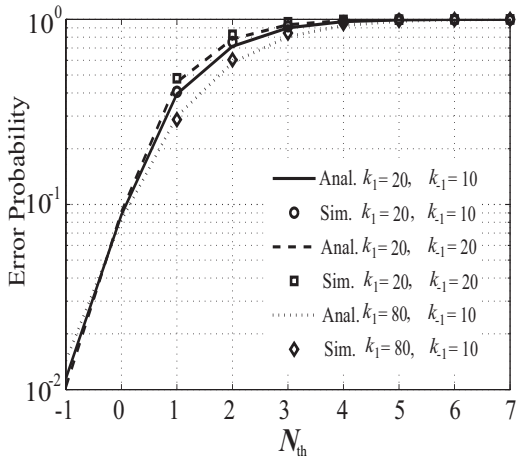}
        \caption{The error probability for the last transmit bit-1. }
        \label{fig:7}
    \end{center}
\end{figure} 

\subsection{Error Probability}
%Fig.~\ref{} plots the CDF of   the net number of adsorbed molecules at the A$\&$D  receiver during $[T,T+T_s]$ to see the accuracy of the arrival modelling. We observe that 

Figs.~\ref{fig:7} and \ref{fig:8} plot the error probability as a function of decision threshold for the third bit in a 3-bit sequence where the last bit is bit-1 and bit-0, respectively. The first 2 bits are ``1 1". {In these two figures, we set the parameters: $N_{\rm{tx}} = 50$, $r_r = 15$ $\rm{\mu m}$, $r_0 = 20$ $\rm{\mu m} $, $D = 5$ $\rm{\mu m^2/s} $, $\Delta t =10^{-6}$ s,  $T_s =0.002$ s,  and the bit interval $T_b =0.2$ s.} Note that with lower diffusion coefficient and larger distance between the transmitter and the receiver, a weaker signal is observed.
The  simulation results are compared with the evaluation of \eqref{error_bit1} for bit-1 and \eqref{error_bit0} for bit-0, where the  net number of adsorbed molecules expected at the surface of the receiver are approximated by the Skellam distribution. 
There are negative thresholds with meaningful error probabilities, thus confirming the need for the Skellam distribution.
{ The simulation points are plotted by averaging the total errors  over $10^5$ independent emissions  of  transmit  sequences with last bit-1 and bit-0. In both figures, we see a close match between the
simulation points and the analytical lines. }
%The overall error performance for a given threshold is poor, but our on-going work is to improve the bit error probability through receiver design.

Fig.~\ref{fig:7} plots the error probability of the last transmit bit-1 at the A$\&$D  receiver with $N_b = 3$ bits transmitted for various adsorption rate $k_1$ and desorption rate $k_{-1}$. We see that the error probability  of the last transmit bit-1 increases monotonically with increasing   threshold $N_{\rm{th}}$.
Interestingly, we find that for the same $k_{-1}$, the error probability improves with increasing $k_1$. This can be explained by the fact that increasing $k_1$ increases the amplitude of the  net number of adsorbed molecules expected (as shown in Fig.~\ref{fig:1}), which makes the received signal for bit-1 more distinguishable than that for  bit-0.  For the same $k_1$, the error probability degrades with increasing $k_{-1}$, which is because the received signal for bit-1 is less  distinguishable than that for bit-0 with increasing $k_{-1}$, as shown in Fig.~\ref{fig:2}.

\begin{figure}[t!]
    \begin{center}
        \includegraphics[width=3.0 in,height=2.5in]{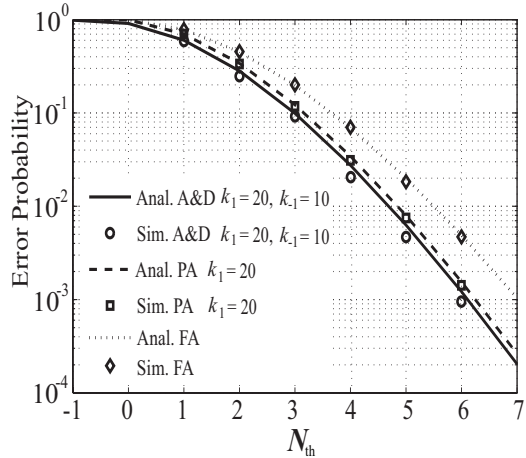}
        \caption{The error probability for the last transmit bit-0. }
        \label{fig:8}
    \end{center}
\end{figure}

\begin{figure}[t!]
    \begin{center}
        \includegraphics[width=3.0 in,height=2.5in]{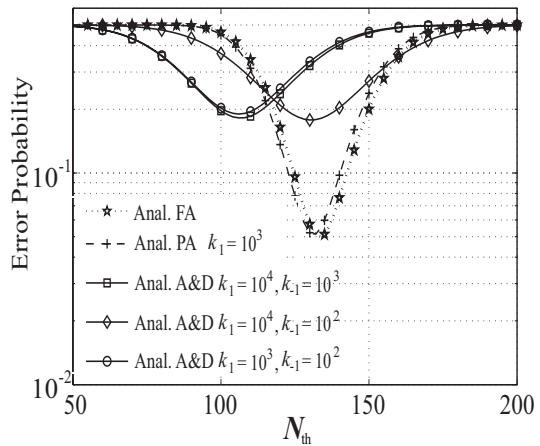}
        \caption{The error probability for the last random transmit bit. }
        \label{fig:9}
    \end{center}
\end{figure} 

 Fig.~\ref{fig:8} plots the error probability of the last transmit bit-0 for different types of  receivers with $N_b = 3$ bits transmitted. The error probability of the full adsorption receiver and the partial adsorption receiver are plotted using \eqref{error_bitFA1} and \eqref{error_bitFA0}.
  We see that the error probability  of the last transmit bit-0 decreases monotonically with increasing the  threshold $N_{\rm{th}}$.
  Interestingly, we see that the  error probability  of  the A$\&$D  receiver with  $k_1 =20$ $\mu \rm{m /s}$  and $k_{-1} =10$  $\rm{s^{-1}} $ outperforms that of the partial adsorption receiver with $k_1 =20$ $\mu \rm{m /s}$  and that of the full adsorption receiver, which is due to the higher tail effect from previous bits imposed on the partial adsorption receiver  and the full adsorption receiver compared to  that imposed on  the A$\&$D  receiver with   $k_{-1}=10$  $\rm{s^{-1}} $ as shown in  Fig.~\ref{fig:3}.
%{For the 3 transmit bits  considered here with  the current bit to be bit-0, these results suggest an advantage due to lower tail at the  A$\&$D  receiver than that at the partial adsorption receiver and the full adsorption receiver over this time scale. }
 
 % We also plotted the error probability of random last transmit bit, which show a poor performance. Our ongoing work is to improve the error probability of random last transmit bit through other receiver designs.
  %We find that the error probability of the full adsorption receiver outperforms that of the partial adsorption receiver and the A$\&$D  receiver, due to that it having a higher peak net number of adsorbed molecules as expected from Fig.~\ref{fig:3}.
  
%  The ISI observations for the different types of  receivers, shown in Fig.  \ref{fig:3}, are not reflected in the error probability, due to the dominant effect being the peak  net number of adsorbed molecules expected at the surface of the receiver,  when the number of  transmitted information  molecules is  large ($N_{\rm{tx}} = 1000 $). Moreover,  the Poisson approximation is only accurate for large $N_{\rm{tx}}$, thus  the error probability with a small $N_{\rm{tx}}$ is more challenging to model accurately.
%

{Fig.~\ref{fig:9} plots the analytical results of the error probability of the last random transmit bit  for different types of  receivers with various $k_1$ and $k_{-1}$ using \eqref{overallerror}, considering that the analytical results have been verified in Figs.~\ref{fig:7} and ~\ref{fig:8}. We  set the parameters: $N_{\rm{tx}} = 1000$, $r_r = 5$ $\rm{\mu m}$, $r_0 = 10$ $\rm{\mu m} $, $D = 79.4$ $\rm{\mu m^2/s} $,   $T_s =0.002$ s,    $T_b =0.05$ s, $P_1 =P_0 =0.5$, and the first 2 bits ``1 1". Interestingly, we see that the error probability of the last random bit of the A$\&$D receiver is lower  than that of the full adsorption receiver and the partial adsorption receiver for $N_{\rm{th}}< 110$, and higher than that of the full adsorption receiver and the partial adsorption receiver for $N_{\rm{th}}> 120$. The observations are consistent with what we expected, and
 can be explained by  the advantage due to a lower tail at the  A$\&$D  receiver than that at the other receivers  at \emph{lower detection thresholds}, and  the advantage due to a higher peak at the other receivers than that at the  A$\&$D  receiver at \emph{higher detection thresholds.} We also observe that overall the PA and FA have lower optimal bit error probability. The A$\&$D receiver with $k_1=10^4$ and $k_{-1}=10^3$ achieves comparable   bit error probability value as that with $k_1=10^3$ and $k_{-1}=10^2$, which may due to having the same $k_1/{k_{-1}}$ ratio. }

\section{Conclusion}

In this paper, we modeled the diffusion-based MC system with the A$\&$D  receiver. We derived the exact expression for the   net number of adsorbed information molecules expected at the surface of the receiver. We also derived the asymptotic expression for the expected number of adsorbed information molecules as the bit interval goes to infinity.  We then derived the bit error probability of the A$\&$D  receiver. We also presented a simulation algorithm that captures the behavior of each information molecule with the stochastic reversible reaction at the receiver. 

 %We claimed that decoding the received signal using  the net number of  adsorbed information molecules expected during one bit interval 
% is more appropriate than using the number of \emph{cumulatively}-adsorbed   molecules expected by the end of that bit interval.
Our results  showed that the 
error probability of the A$\&$D  receiver can be approximated by the Skellam distribution, and our derived analytical results closely matched our simulation results.   {  We revealed that the error probability of the A$\&$D  receiver for the transmit bit-1 improves with increasing  adsorption rate and with decreasing  desorption rate. More importantly,   the error probability of the A$\&$D receiver for the last transmitted bit is worse at higher detection thresholds but better at low detection thresholds than both the full adsorption and partial adsorption receivers. This is because the A$\&$D receiver observes a lower peak number of adsorbed molecules but then a faster decay.}
 Our analytical model and simulation framework provide a foundation for the accurate modeling and analysis of a more complex and realistic  receiver in molecular communication.
 
 %The impact of the adsorption rate and the desorption rate on the receiver performance is well characterized. 
\appendices

\section{Proof of Theorem 1}\label{app_gsc_exact}

We first partition the spherically symmetric distribution into two parts using the method applied in \cite{schulten2000lectures}
\begin{align}
r{\cdot}C\left( {r,\left. t \right|{r_0}} \right) = r{\cdot}g\left( {r,\left. t \right|{r_0}} \right) 
+ r{\cdot}h\left( {r,\left. t \right|{r_0}} \right)
, \label{der1}
\end{align}
where
\begin{align}
g\left( {r,\left. {t \to 0} \right|{r_0}} \right) = \frac{1}{{4\pi {r_0}}}\delta \left( {r - {r_0}} \right)
, \label{der11}
\end{align}
\begin{align}
h\left( {r,\left. {t \to 0} \right|{r_0}} \right) = 0
. \label{der12}
\end{align}

Then, by substituting \eqref{der1} into \eqref{ficklaw},  we have
\begin{align}
\frac{{\partial \left( {r \cdot g\left( {r,\left. t \right|{r_0}} \right)} \right)}}{{\partial t}} = D\frac{{{\partial ^2}\left( {r \cdot g\left( {r,\left. t \right|{r_0}} \right)} \right)}}{{\partial {r^2}}}
, \label{der2}
\end{align}
and
\begin{align}
\frac{{\partial \left( {r \cdot h\left( {r,\left. t \right|{r_0}} \right)} \right)}}{{\partial t}} = D\frac{{{\partial ^2}\left( {r \cdot h\left( {r,\left. t \right|{r_0}} \right)} \right)}}{{\partial {r^2}}}
. \label{der3}
\end{align}

To derive $g\left( {r,\left. t \right|{r_0}} \right)$, we perform a Fourier transformation on $ r g\left( {r,\left. t \right|{r_0}} \right)$ to yield
\begin{align}
G\left( {k,\left. t \right|{r_0}} \right) = \int_{ - \infty }^\infty  {rg\left( {r,\left. t \right|{r_0}} \right){e^{ - ikr}}dr}
, \label{der41}
\end{align}
and
\begin{align}
r \cdot g\left( {r,\left. t \right|{r_0}} \right) = \frac{1}{{2\pi }}\int_{ - \infty }^\infty  {G\left( {k,\left. t \right|{r_0}} \right){e^{ikr}}dk} 
. \label{der42}
\end{align}

We then perform the Fourier transformation on \eqref{der2} to yield
\begin{align}
\frac{{dG\left( {k,\left. t \right|{r_0}} \right)}}{{dt}} =  - D{k^2}G\left( {k,\left. t \right|{r_0}} \right)
. \label{der5}
\end{align}

According to \eqref{der5} and the uniqueness of the Fourier transform, we derive 
\begin{align}
G\left( {k,\left. t \right|{r_0}} \right) = {K_g}\exp \left\{ { - D{k^2}t} \right\}
,\label{der51}
\end{align}
where $K_g$ is an undetermined constant.

The Fourier transformation  performed on \eqref{der11} yields
\begin{align}
G\left( {r,\left. {t \to 0} \right|{r_0}} \right) = \frac{1}{{4\pi {r_0}}}{e^{ - ik{r_0}}}
. \label{der111}
\end{align}

Combining \eqref{der51} and \eqref{der111}, we arrive at 
\begin{align}
G\left( {k,\left. t \right|{r_0}} \right) = \frac{1}{{4\pi {r_0}}}{e^{ - ik{r_0}}}\exp \left\{ { - D{k^2}t} \right\}
. \label{der6}
\end{align}

Substituting \eqref{der6} into \eqref{der42}, we find that
\begin{align}
r \cdot g\left( {r,\left. t \right|{r_0}} \right) = \frac{1}{{8\pi {r_0}\sqrt {\pi Dt} }}\exp \left\{ { - \frac{{{{\left( {r - {r_0}} \right)}^2}}}{{4Dt}}} \right\}
. \label{der52}
\end{align}

By performing the Laplace transform on  \eqref{der52}, we write
\begin{align}
\mathcal{L}\left\{ {r \cdot g\left( {r,\left. t \right|{r_0}} \right)} \right\} = \frac{1}{{8\pi {r_0}\sqrt {Ds} }}\exp \left\{ { - \left| {r - {r_0}} \right|\sqrt {\frac{s}{D}} } \right\}
. \label{der53}
\end{align}

We then focus on solving the solution $h\left( {k,\left. t \right|{r_0}} \right)$ by first performing the Laplace transform on $h\left( {k,\left. t \right|{r_0}} \right)$ and \eqref{der3} as
\begin{align}
H\left( {r,\left. s \right|{r_0}} \right)= \mathcal{L}\left\{ {  h\left( {r,\left. t \right|{r_0}} \right)} \right\} = \int_0^\infty  {h\left( {r,\left. t \right|{r_0}} \right){e^{ - s\tau }}d\tau } 
,\label{der7}
\end{align}
and 
\begin{align}
srH\left( {r,\left. s \right|{r_0}} \right) = D\frac{{{\partial ^2}\left( {rH\left( {r,\left. s \right|{r_0}} \right)} \right)}}{{\partial {r^2}}}
,\label{der8}
\end{align}
respectively.

According to \eqref{der8}, the Laplace transform of the solution with respect to the boundary condition in \eqref{der8} is 
\begin{align}
rH\left( {r,\left. s \right|{r_0}} \right) = f\left( s \right)\exp \left\{ { - \sqrt {\frac{s}{D}} r} \right\}
,\label{der9}
\end{align}
where $f\left( s \right)$ needs to satisfy the second initial condition in \eqref{initial2}, and the second boundary condition in  \eqref{boundary1} and \eqref{boundary2}.

Having the Laplace transform of $\{r \cdot g\left( {r,\left. t \right|{r_0}} \right)\}$ and ${  h\left( {r,\left. t \right|{r_0}} \right)}$ in \eqref{der53} and \eqref{der9}, and performing a Laplace transformation on \eqref{der1}, we derive 
\begin{align}
r\tilde C\left( {{r},\left. s \right|{r_0}} \right) = &G\left( {r,\left. s \right|{r_0}} \right) + rH\left( {r,\left. s \right|{r_0}} \right)
\nonumber\\ = &  \frac{1}{{8\pi {r_0}\sqrt {Ds} }}\exp \left\{ { - \left| {r - {r_0}} \right|\sqrt {\frac{s}{D}} } \right\} 
\nonumber\\&+ f\left( s \right)\exp \left\{ { - \sqrt {\frac{s}{D}} r} \right\},
\label{der10}
\end{align}
where $\tilde C\left( {r,\left. s \right|{r_0}} \right) = \int_0^\infty  {C\left( {r,\left. t \right|{r_0}} \right){e^{ - st}}dt} $.

{ We deviate from the method in \cite{schulten2000lectures},  and   perform the Laplace transform on the Robin boundary condition in \eqref{boundary3} to  solve $f\left( s \right)$, which yields}
\begin{align}
{{\tilde C}_a}\left( {\left. s \right|{r_0}} \right) = \frac{{{k_1}\tilde C\left( {{r_r},\left. s \right|{r_0}} \right)}}{{s + {k_{ - 1}}}}
,\label{der1111}
\end{align}
where  ${{\tilde C}_a}\left( {r,\left. s \right|{r_0}} \right) = \int_0^\infty  {{C_a}\left( {r,\left. t \right|{r_0}} \right){e^{ - st}}dt} $.

We then perform the Laplace transform on  the second initial condition in \eqref{initial2} and the second boundary condition in \eqref{boundary1} as
\begin{align}
{\left. {D\frac{{\partial \left( {\tilde C\left( {r,\left. s \right|{r_0}} \right)} \right)}}{{\partial r}}} \right|_{r = {r_r}}} = {k_1}\tilde C\left( {{r_r},\left. s \right|{r_0}} \right) - {k_{ - 1}}{{\tilde C}_a}\left( {\left. s \right|{r_0}} \right). \label{der121}
\end{align}

Substituting \eqref{der1111} into \eqref{der121}, we obtain
\begin{align}
{\left. {D\frac{{\partial \left( {\tilde C\left( {r,\left. s \right|{r_0}} \right)} \right)}}{{\partial r}}} \right|_{r = {r_r}}}   =   \frac{{{k_1}s}}{{s + {k_{ - 1}}}}\tilde C\left( {{r_r},\left. s \right|{r_0}} \right)
.\label{der12}
\end{align}

To facilitate the analysis, we express the Laplace transform on the second boundary condition  as 
\begin{align}
{\left. {\frac{{\partial \left( {r \cdot \tilde C\left( {r,\left. s \right|{r_0}} \right)} \right)}}{{\partial r}}} \right|_{r = {r_r}}} = \left( {1 + \frac{{{r_r}{k_1}s}}{{D\left( {s + {k_{ - 1}}} \right)}}} \right)\tilde C\left( {r,\left. s \right|{r_0}} \right)
.\label{der13}
\end{align}

Substituting \eqref{der10} into \eqref{der13}, we determine $f\left( s \right)$ as
\begin{align}
&f\left( s \right) = \frac{{\left( {\sqrt {\frac{s}{D}}  - \frac{1}{{{r_r}}} - \frac{{{k_1}s}}{{D\left( {s + {k_{-1}}} \right)}}} \right)}}{{\left( {\sqrt {\frac{s}{D}}  + \frac{1}{{{r_r}}} + \frac{{{k_1}s}}{{D\left( {s + {k_{-1}}} \right)}}} \right)}}\frac{{\exp \left\{ { - \left( {{r_0} - 2{r_r}} \right)\sqrt {\frac{s}{D}} } \right\}}}{{8\pi {r_0}\sqrt {Ds} }}.
\label{der15}
\end{align}

Having \eqref{der10} and \eqref{der15}, and performing the Laplace transform of the  concentration distribution,  we derive 
\begin{align}
&r \tilde C\left( {r,\left. s \right|{r_0}} \right) = \frac{1}{{8\pi {r_0}\sqrt {Ds} }}\exp \left\{ { - \left| {r - {r_0}} \right|\sqrt {\frac{s}{D}} } \right\}
\nonumber\\& \hspace{0.5cm} + \frac{1}{{8\pi {r_0}\sqrt {Ds} }}\exp \left\{ { - \left( {r + {r_0} - 2{r_r}} \right)\sqrt {\frac{s}{D}} } \right\}
\nonumber\\& \hspace{0.5cm} - \underbrace {  \frac{{2\left( {\frac{1}{{{r_r}}} + \frac{{{k_1}s}}{{D\left( {s + {k_{-1}}} \right)}}} \right)}{\exp \left\{ { - \left( {r + {r_0} - 2{r_r}} \right)\sqrt {\frac{s}{D}} } \right\}}}{{8\pi {r_0}\sqrt {Ds} }{\left( {\frac{1}{{{r_r}}} + \frac{{{k_1}s}}{{D\left( {s + {k_{-1}}} \right)}} + \sqrt {\frac{s}{D}} } \right)}}}_{Z\left( s \right)}.
\label{der16}
\end{align}

Applying the inverse Laplace transform  leads to 
\begin{align}
&r C\left( {r,\left. s \right|{r_0}} \right) = \frac{1}{{8\pi {r_0}\sqrt {\pi Dt} }}\exp \left\{ { - \frac{{{{\left( {r - {r_0}} \right)}^2}}}{{4Dt}}} \right\} +
\nonumber\\& \hspace{0.5cm} \frac{1}{{8\pi {r_0}\sqrt {\pi Dt} }}\exp \left\{ { - \frac{{{{\left( {r + {r_0} - 2{r_r}} \right)}^2}}}{{4Dt}}} \right\} -{\mathcal{L}^{ - 1}}\left\{ {Z\left( s \right)} \right\}.
\label{der17}
\end{align}

Due to the complexity of $Z(s)$, we can not derive the closed-form expression for its inverse Laplace transform ${f_z}\left( t \right) = {\mathcal{L}^{ - 1}}\left\{ {Z\left( s \right)} \right\}$. We  employ the Gil-Pelaez theorem \cite{wendel1961} for the characteristic function to derive  the cumulative distribution function (CDF) ${F_z}\left( t \right) $ as
\begin{align}
{F_z}\left( t \right) = & \frac{1}{2} - \frac{1}{\pi }\int_0^\infty  {\frac{{{\mathop{\rm Im}\nolimits} \left[ {{e^{ - jwt}}{\varphi _Z}\left( w \right)} \right]}}{w}} dw,
\nonumber\\   =  & \frac{1}{2} - \frac{1}{\pi }\int_0^\infty  {\frac{{{e^{ - jwt}}\varphi _Z^*\left( w \right) - {e^{jwt}}{\varphi _Z}\left( w \right)}}{{2jw}}} dw,
\label{der181}
\end{align}
where 
${\varphi _Z}\left( w \right)$ is given in \eqref{der19}.

Taking the derivative of ${F_z}\left( t \right)$, we derive the inverse Laplace transform of $Z(s)$ as
\begin{align}
&{f_z}\left( t \right) = \frac{1}{{2\pi }}\int_0^\infty  {\left( {{e^{ - jwt}}\varphi _Z^*\left( w \right) + {e^{jwt}}{\varphi _Z}\left( w \right)} \right)dw} .
\label{der18}
\end{align}

Combining \eqref{der17} and \eqref{der19}, we finally derive the expected time-varying spatial distribution in \eqref{der20}.
\bibliographystyle{IEEEtran}
\bibliography{IEEEabrv,ReceiverdoubleICCversion}

\balance
\end{document}